\documentclass[journal]{IEEEtran}

\usepackage[utf8]{inputenc}
\usepackage[T1]{fontenc}
\usepackage{amsmath,amssymb,amsthm,amsfonts}
\usepackage{mathtools}
\usepackage{bm}
\usepackage{graphicx}
\usepackage{subcaption}
\usepackage{booktabs}
\usepackage{algorithm}
\usepackage{algpseudocode}
\usepackage{url}
\usepackage[hidelinks]{hyperref}
\usepackage{cleveref}

\newtheorem{proposition}{Proposition}

\theoremstyle{definition}

\theoremstyle{remark}
\newtheorem{remark}{Remark}

\newcommand{\E}{\mathbb{E}}

\newcommand{\sphere}{\mathbb{S}}
\newcommand{\torus}{\mathbb{T}}
\newcommand{\SO}{\mathrm{SO}}
\newcommand{\M}{\mathcal{M}}

\newcommand{\CN}{\mathcal{CN}}

\newcommand{\WBINTRVSEUCLPCT}{??}
\newcommand{\WBGPVSUCBPCT}{??}
\IfFileExists{results/exp5_wideband_macros.tex}{%

\renewcommand{\WBINTRVSEUCLPCT}{17.1}
\renewcommand{\WBGPVSUCBPCT}{50.9}
}{}

\newcommand{\UBThreeA}{??}
\newcommand{\UBThreeB}{N/A}
\newcommand{\UBThreeC}{N/A}
\newcommand{\UBThreeD}{N/A}
\newcommand{\HOOA}{??}
\newcommand{\HOOB}{??}
\newcommand{\HOOC}{??}
\newcommand{\HOOD}{??}
\newcommand{\DBZExpOne}{??}
\IfFileExists{results/ub3_hoo_macros.tex}{%
\renewcommand{\UBThreeA}{16361\,\pm\,379}
\renewcommand{\HOOA}{7517\,\pm\,459}
\renewcommand{\UBThreeB}{N/A}
\renewcommand{\HOOB}{268\,\pm\,34}
\renewcommand{\UBThreeC}{N/A}
\renewcommand{\HOOC}{11807\,\pm\,472}
\renewcommand{\UBThreeD}{N/A}
\renewcommand{\HOOD}{389\,\pm\,12}
}{}
\IfFileExists{results/dbz_macros.tex}{%
\renewcommand{\DBZExpOne}{16711\,\pm\,410}
}{}

\newcommand{\AdvTwovaaMean}{+2.34\,\pm\,4.59}
\newcommand{\AdvTwovaaCum}{\sim\!6500}

\newcommand{\WGPvaaMean}{??}
\newcommand{\WGPvaaCum}{??}

\newcommand{\AdvTwoSeedsVaa}{20}

\newcommand{\WGPSeedsVaa}{4}

\IfFileExists{results/ris_20seed_macros.tex}{%
\renewcommand{\AdvTwovaaMean}{+1.08\,\pm\,1.19}
\renewcommand{\AdvTwovaaCum}{\sim\!5000}

\renewcommand{\WGPvaaMean}{+0.99\,\pm\,1.29}
\renewcommand{\WGPvaaCum}{\sim\!7700}

\renewcommand{\AdvTwoSeedsVaa}{20}

\renewcommand{\WGPSeedsVaa}{20}

}{}

\newcommand{\profGPEuclSel}{??}

\newcommand{\profGPIntrSel}{??}

\IfFileExists{results/profile_macros.tex}{%

\renewcommand{\profGPEuclSel}{22}

\renewcommand{\profGPIntrSel}{23}

}{}

\newcommand{\AdaptLMLA}{??}
\newcommand{\AdaptLMLB}{??}
\newcommand{\AdaptLMLC}{??}
\newcommand{\AdaptLMLD}{??}
\IfFileExists{results/adaptive_lml_macros.tex}{%
\renewcommand{\AdaptLMLA}{3844\,\pm\,461}
\renewcommand{\AdaptLMLB}{283\,\pm\,49}
\renewcommand{\AdaptLMLC}{9209\,\pm\,411}
\renewcommand{\AdaptLMLD}{515\,\pm\,24}
}{}

\newcommand{\JointLMLA}{??}
\newcommand{\JointLMLB}{??}
\newcommand{\JointLMLC}{??}
\newcommand{\JointLMLD}{??}
\IfFileExists{results/adaptive_joint_lml_macros.tex}{%
\renewcommand{\JointLMLA}{3615\,\pm\,620}
\renewcommand{\JointLMLB}{256\,\pm\,58}
\renewcommand{\JointLMLC}{7879\,\pm\,547}
\renewcommand{\JointLMLD}{466\,\pm\,33}
}{}

\newcommand{\PassDAlpha}{??}

\newcommand{\PassDCILo}{??}
\newcommand{\PassDCIHi}{??}
\newcommand{\PassDPAROne}{??}
\newcommand{\PassDPBGZ}{??}

\IfFileExists{results/pass_d_macros.tex}{%
\renewcommand{\PassDAlpha}{0.59}

\renewcommand{\PassDCILo}{-0.15}
\renewcommand{\PassDCIHi}{0.86}
\renewcommand{\PassDPAROne}{0.0}
\renewcommand{\PassDPBGZ}{65.7}

}{}

\IfFileExists{results/hyperparam_sensitivity_macros.tex}{%

}{}

\IfFileExists{results/sionna_macros.tex}{%

}{}
\IfFileExists{results/sionna_macros_cdld.tex}{%

}{}

\title{Geometry-Aware Multi-Armed Bandits for Antenna
Beam Selection on Spheres, Tori, $\SO(3)$, and Reconfigurable
Intelligent Surfaces}

\author{Yuriy~Dorn,
        Changsheng~Chen,~\IEEEmembership{Senior~Member,~IEEE,}
        and~Ning~Xie,~\IEEEmembership{Senior~Member,~IEEE}
\thanks{Y.~Dorn is with the AI Center \& IAI MSU,
Lomonosov Moscow State University, Moscow, Russia
(e-mail: \texttt{dornyv@my.msu.ru}).}%
\thanks{C.~Chen is with the Faculty of Engineering, Shenzhen
MSU-BIT University, Shenzhen, China
(e-mail: \texttt{cschen@smbu.edu.cn}).}%
\thanks{N.~Xie (\emph{Corresponding author}) is with the State
Key Laboratory of Radio Frequency Heterogeneous Integration, the
Guangdong Key Laboratory of Intelligent Information Processing,
and the Shenzhen Key Laboratory of Media Security, College of
Electronics and Information Engineering, Shenzhen University,
Shenzhen 518060, China
(e-mail: \texttt{ningxie@szu.edu.cn}).}%
}

\begin{document}
\maketitle

\begin{abstract}
Beam alignment in mmWave phased arrays and RIS-assisted links is a
stochastic bandit under both short TTI budgets and
Doppler-induced non-stationarity. The arm space is a Riemannian manifold:
$\sphere^2$ for steering, $\torus^n$ for phase combining, $\SO(3)$
for panel orientation, or the discrete torus $(\mathbb Z_B)^M$
with up to $K\!\sim\!10^{90}$ configurations for $B$-level RIS
($B\!=\!2^b$, $b$ bits/element);
the intrinsic Mat\'ern kernel of Borovitskiy et al.\ provides the
base GP. We contribute two algorithmic pieces. \textbf{(C1)} A
Kronecker-factorised intrinsic-product Mat\'ern kernel on
$(\mathbb Z_B)^M$ evaluating in $O(M)$ table lookups, making GP-UCB
tractable at $K\sim 10^{90}$ where the extrinsic alternative is
infeasible. \textbf{(C2)} AdaptiveGP-v2, an online sliding-window
controller that selects $W$ by per-sample marginal likelihood,
with predictive-variance and drift $z$-score reset triggers and a
post-reset $\beta$-boost. On a four-speed
($v\!\in\!\{0.02,0.08,0.12,0.20\}$~km/h), $20$-seed paired campaign
at $T\!=\!3000$, AdaptiveGP-v2 is statistically
indistinguishable from the hand-tuned fixed-window oracle at every
speed (Holm--Bonferroni-corrected paired differences cross zero);
the operational benefit is the absence of a deployment-time
per-speed calibration step, not a mean-regret improvement. On four
static 3GPP-style mmWave benchmarks, intrinsic-kernel GP-UCB
reduces cumulative regret by $25$--$45\%$ vs.\ codebook
UCB1/Thompson and by $10$--$33\%$ vs.\ Euclidean-ambient GP-UCB on
the toroidal arm spaces; a wideband OFDM ablation on a $100$~MHz
channel confirms the advantage persists under frequency-selective
fading ($\sim\!32$~Mbps/UE at initial access vs.\ UCB1). A
third-party-simulator sanity check on Sionna CDL is reported in
Section~V.
\end{abstract}

\begin{IEEEkeywords}
Beam alignment, beam management, reconfigurable intelligent
surfaces (RIS), multi-armed bandits, online learning, mmWave,
Gaussian-process bandits, intrinsic Mat\'ern kernels, Riemannian
manifolds, non-stationary channels, adaptive sliding window.
\end{IEEEkeywords}

\section{Introduction}
\label{sec:intro}

Phased antenna arrays in 5G/6G cellular systems must select an
operating configuration (a steering direction, a per-element
phase code, a panel orientation, or a reconfigurable
intelligent-surface (RIS) phase profile) online, from a large
finite or continuous codebook, under noisy channel feedback. The
control loop is short (one transmission time interval, $\sim 1$~ms),
the feedback is one scalar reward per pull (the post-combiner SNR or
its proxy), and the channel itself is non-stationary on tens-of-ms
timescales because of mobility-induced Doppler. This is the
canonical setting of a stochastic multi-armed bandit, and the
beam-management literature has framed it as such for
codebook-based UCB and Thompson-sampling
schemes~\cite{risa2021,ren2022csm,chen2023ce}.

These approaches inherit a strong implicit assumption from the
continuum-armed bandit theory that underpins
them~\cite{kleinberg2008multi,bubeck2011xarmed,magureanu2014lipschitz}:
the arm space is a metric space, typically $[0,1]^d$ with the
Euclidean metric, with a known $\Omega(T^{(d+1)/(d+2)})$ regret
lower bound dictated by the metric covering dimension and a
Lipschitz reward function.
The phased-array setting violates this assumption in a structural
way. Beam steering directions live on the sphere $\sphere^2$; a
per-element phase combiner with $n$ active elements lives on the
torus $\torus^n$; an antenna panel's orientation lives on the
rotation group $\mathrm{SO}(3)$; and a $B$-level RIS
($B=2^b$, $b$ bits per element) with $M$ elements has its
phase configuration in the discrete torus $(\mathbb{Z}_B)^M$,
a quotient of $\torus^M$ by the lattice of $B$-fold rotations. Embedding these spaces into a Euclidean box
either wastes arms (the box's interior is infeasible) or distorts
the reward landscape (a $360^\circ$ wraparound on $\torus^n$ becomes
a maximum-distance discontinuity in the embedding), forcing the
agent to rediscover the optimum after every phase wrap or seam
crossing.

The ingredients needed to avoid this embedding artefact have
recently matured outside the wireless literature.
Borovitskiy \emph{et al.}~\cite{borovitskiy2020matern} constructed
intrinsic Mat\'ern kernels on closed Riemannian manifolds from the
Laplace--Beltrami spectrum, extended to general Lie groups
by~\cite{azangulov2024stationary} and packaged in the
\textsc{GeometricKernels} library~\cite{mostowsky2024geometrickernels}.
Geometry-aware Bayesian optimisation using such kernels has been
demonstrated in robotic motion
planning~\cite{jaquier2020riemannian,jaquier2022geometry}. This
paper imports that machinery into the wireless beam-management
setting and couples it with the GP-UCB cumulative-regret analysis
of Srinivas \emph{et al.}~\cite{srinivas2010gpucb}: instantiate the
arm space as the actual compact Riemannian manifold (or its
quotient) and equip GP-UCB with the corresponding intrinsic Mat\'ern
kernel. Periodicity and curvature then enter through the prior
rather than the discretisation, and the agent automatically
generalises across gauge-equivalent configurations. The intrinsic Mat\'ern kernel construction itself is not new; it
follows Borovitskiy et al.~\cite{borovitskiy2020matern} and Azangulov
et al.~\cite{azangulov2024stationary}. Our contributions sit on top of
that machinery: \textbf{(C1)} a Kronecker-factorised
intrinsic-product kernel on $(\mathbb{Z}_B)^M$ that scales GP-UCB to
the RIS regime at $K \sim 10^{90}$ discrete arms via $O(M)$ table
lookups per kernel evaluation (see the supplement,
Sec.~S-IV), benchmarked against five non-GP RIS
baselines (RISA~\cite{risa2021}, CSM/ECSM~\cite{ren2022csm},
CE~\cite{chen2023ce}, REMARKABLE) and an extrinsic-kernel GP-UCB
ablation. \textbf{(C2)} AdaptiveGP-v2, an online sliding-window
controller whose window length is chosen by per-sample marginal
likelihood on a coarse geometric grid, with a
predictive-variance $z$-score reset trigger and a post-reset
$\beta$ boost; the controller matches a
hand-tuned fixed-window IntrinsicGP within standard error at every
speed in our grid \emph{without requiring any per-speed coherence
calibration}, removing the deployment-time calibration step that a
fixed-window controller otherwise needs. \textbf{(C3)} A unified
GP-UCB-on-manifold template for the phased-array setting covering
$\sphere^2$, $\torus^n$, and $\mathrm{SO}(3)$, with a head-to-head
empirical evaluation on four 3GPP-style mmWave case studies, a
wideband OFDM ablation on a $100$~MHz 3GPP TDL-D channel, and a
Sionna CDL sanity check on a third-party simulator.

The empirical regret reductions are substantial in the static
regime (the intrinsic-kernel GP beats the Euclidean ablation by
$10$--$33\%$ across the four case studies). In the time-varying RIS
regime AdaptiveGP-v2 is statistically indistinguishable from the
hand-tuned fixed-$W$ oracle at every speed in our grid
(Holm--Bonferroni-corrected paired differences cross zero at every
speed): the operational value is the absence of a deployment-time
coherence calibration step, not a mean-regret improvement. A
supporting $W^\star(v)$ sensitivity analysis
(Sec.~\ref{sec:ris-ext-theory} and the supplement, Sec.~S-I)
shows that the regret-vs-$W$ curve is locally flat around its argmin
at every speed, exactly the regime in which an LML-based adaptive
rule pays only a constant-factor penalty relative to the
speed-specific oracle, regardless of the precise $W^\star(v)$
exponent.

The remainder of the paper proceeds through related work
(Sec.~\ref{sec:related}), the geometric model
(Sec.~\ref{sec:model}), the GP-UCB algorithm
(Sec.~\ref{sec:algorithm}), four narrowband case studies and the
extension-summary subsection
(Sec.~\ref{sec:experiments}), the RIS extension and AdaptiveGP-v2
(Sec.~\ref{sec:ris-ext}), and a conclusion
(Sec.~\ref{sec:conclusion}).

\section{Related Work}
\label{sec:related}

The recent literature most directly comparable to the present paper
falls into two camps: bandit- and learning-based beam management for
mmWave phased arrays, and optimisation / learning-based controllers
for reconfigurable intelligent surfaces. We summarise both below.
The continuum-armed bandit theory and intrinsic-Mat\'ern-kernel
machinery that our method combines are briefly reviewed in the
introduction (Sec.~\ref{sec:intro}).

\subsection{Bandits for mmWave beam alignment}
\label{sec:related:mmwave}

The beam-management literature has converged on a small set of
structural priors that make bandit problems with very large codebooks
tractable. RISA~\cite{risa2021} exploits temporal correlation through
an exponentially weighted moving average over the codebook;
CSM/ECSM~\cite{ren2022csm} use a contextual codebook structure with
either $\varepsilon$-greedy or UCB exploration; the cross-entropy
combiner of~\cite{chen2023ce} treats the search as parametric
distribution fitting. Closer to our work, contextual bandits with
neural-network feature extractors~\cite{va2018contextual} and
unimodal-bandit approaches~\cite{ghosh2024ub3} explicitly assume
monotone or unimodal reward landscapes; physics-informed parametric
bandits~\cite{physics2025parametric} and beam-aware kernelised
contextual bandits~\cite{beamaware2026kernelized} import richer
structural priors from the channel model. A recent systematic
review~\cite{beamreview2025} classifies $73$ beam-management studies
into beam-sweeping, context-information, compressive-sensing,
ML/AI, and ISAC-based frameworks, placing our work in the ML/AI
category but flagging the geometry-aware bandit subclass as
underexplored.

Deep-learning approaches have accelerated in the last two years:
hierarchical beam-alignment~\cite{hierarchical2023twc}, grid-free
DL~\cite{gridfreedl2024twc}, DRL initial
access~\cite{drl2023initial}, and a V2I extension~\cite{drlv2i2024}.
We do not report a head-to-head comparison: DRL methods require
$10^4$--$10^5$ training episodes per channel realisation, whereas
our entire experimental budget per seed is $T\le 3000$ in
\emph{cold-start} operation (no offline phase). A meaningful
comparison would require either extending all methods to a
$T\!\ge\!10^4$ warm-start budget (cutting against our
short-coherence motivation) or pre-training the DRL methods on a
held-out channel distribution and reporting the distribution-shift
cost. The hierarchical and grid-free DL methods are nearer-budget
candidates but both inherit a site-specific offline training
stage. Our intrinsic GP-UCB is the cold-start counterpart in this
taxonomy; non-DL baselines (RISA, CSM/ECSM, CE, REMARKABLE, HOO,
UB3, DBZ, WGP-UCB) are the directly-comparable alternatives at the
same sample budget.

The common thread across all of this work is that none of it
treats the geometry of the physical control parameter as a
first-class object: they either (a) parametrise the search by a
codebook index and impose Lipschitz continuity on the index, (b)
assume unimodality, or (c) exploit a sparse / phase-retrieval
structure of the channel rather than the geometric structure of the
parameter space. Our intrinsic-kernel formulation is orthogonal to
these priors and could in principle be combined with any of them.

\subsection{RIS beamforming and phase-profile design}
\label{sec:related:ris}

Reconfigurable intelligent surfaces have attracted a rich body of
optimisation work. Early alternating-optimisation and semidefinite
relaxation methods for joint BS--RIS beamforming are surveyed in the
adaptive-beamforming study of~\cite{huang2021adaptive}, which also
provides experimental validation of the forward-model assumptions
our RIS extension (Sec.~\ref{sec:ris-ext}) inherits. Recent IEEE
TWC work generalises the RIS forward model to Beyond-Diagonal
designs~\cite{bdris2024twc}, to wideband
deployment~\cite{riswideband2024}, and to hardware-impairment
regimes~\cite{drlris2024twc}. DRL-based RIS
controllers~\cite{drlris2024twc} and two-phase minimax bandit
schemes~\cite{twophasemmwave2023} provide point-of-comparison for
the time-varying RIS regime of Sec.~\ref{sec:ris-ext}.

Our RIS contribution differs from these along two axes. First, the
intrinsic-product kernel on $(\mathbb{Z}_B)^M$ is the only
forward-model-agnostic approach that scales to $K\sim 10^{90}$ by
algorithmic construction rather than by compressing the action
space. Second, the AdaptiveGP-v2 window selection rule of
Sec.~\ref{sec:ris-ext-adaptive} tracks non-stationarity from raw
RSRP observations without the $10^4$--$10^5$-episode warm-up
required by DRL methods. Combining these ideas with the
physics-informed hardware-impairment model
of~\cite{drlris2024twc} or with the BD-RIS architecture
of~\cite{bdris2024twc} is a natural next step.

\subsection{Positioning of this paper}
\label{sec:related:positioning}

The contribution sits at the intersection of three literatures:
continuum-armed bandit theory (which provides the regret
framework), manifold Gaussian processes (which provide the prior),
and mmWave beam management (which provides the application). Each
has previously assumed away one of the others: the bandit
literature assumes a Euclidean metric and so cannot exploit the
manifold structure; the manifold-GP literature works in the BO
regime (which optimises over a fixed budget rather than minimising
cumulative regret) and has not been benchmarked on antenna
problems; the beam-management literature uses problem-specific
priors and has not adopted the manifold-GP machinery. The present
paper closes all three gaps in a single template: GP-UCB on the
relevant compact manifold with the intrinsic kernel
of~\cite{borovitskiy2020matern}, evaluated in cumulative regret on
standard mmWave benchmarks, with an extension to the time-varying
RIS setting.

\section{System Model}
\label{sec:model}

\subsection{Antenna array and reward}
\label{sec:model:array}

We consider a transmitter or receiver equipped with a uniform planar
or uniform linear array of $n$ antenna elements at known positions
$\{\mathbf{p}_i\}_{i=1}^n \subset \mathbb{R}^3$ measured in carrier
wavelengths $\lambda$. At each round $t$, the array applies an analog
beamforming weight vector
$\mathbf{w}(\bm\theta) \in \mathbb{C}^n$ parametrised by a control
input $\bm\theta$ that lives on a compact manifold
$\mathcal{M}$ specified below. The instantaneous narrowband channel
realisation is $\mathbf{h}_t \in \mathbb{C}^n$, and the post-combiner
observation is the noisy received SNR (or beamforming gain),
\begin{equation}
\label{eq:reward}
r_t(\bm\theta) \;=\; \bigl|\mathbf{w}(\bm\theta)^* \mathbf{h}_t\bigr|^2
              + \eta_t,
\qquad
\eta_t \stackrel{\text{i.i.d.}}{\sim} \mathcal{N}(0, \sigma_n^2).
\end{equation}
The channel $\mathbf{h}_t$ is itself stochastic, drawn either from a
3GPP-style clustered multipath model (used in
Section~\ref{sec:experiments}) or from a quasi-static Rician model
with mobility-induced phase rotation (used in
Section~\ref{sec:ris-ext}). The reward function we want to maximise is
the channel-marginalised expectation
$f(\bm\theta) = \mathbb{E}_{\mathbf{h}}\bigl[r(\bm\theta)\bigr]$,
which is itself smooth on $\mathcal{M}$ even when individual
realisations are spiky. This is what makes a Gaussian-process model
of $f$ informative.

\subsection{Three geometric arm spaces}
\label{sec:model:spaces}

The control input $\bm\theta$ lives, depending on the application, on
one of three compact Riemannian manifolds. We list them with their
parametrisations and intrinsic geodesic metrics; the corresponding
Mat\'ern kernels constructed from the Laplace--Beltrami spectrum of
each space are deferred to Section~\ref{sec:algorithm}.

\subsubsection{Pointing direction on $\sphere^2$}
\label{sec:model:sphere}
For idealised steered-beam control, $\bm\theta = \mathbf{u}$ is a unit
pointing direction in 3D, so $\mathbf{u} \in \sphere^2 \subset
\mathbb{R}^3$ with $\|\mathbf{u}\|_2 = 1$. The beamforming weight
applies a per-element steering phase
$w_i(\mathbf{u}) = \exp\bigl(j\,\mathbf{k}(\mathbf{u})^{\!\top}
\mathbf{p}_i\bigr)$
with wavevector $\mathbf{k}(\mathbf{u}) = (2\pi/\lambda)\,\mathbf{u}$.
The intrinsic metric is the great-circle distance
$d_{\sphere^2}(\mathbf{u}, \mathbf{u}') =
\arccos(\mathbf{u}^{\!\top}\mathbf{u}')$. Any chart-based (lat/lon,
azimuth/elevation) representation introduces an artificial seam at the
date line and a coordinate singularity at the poles.

\subsubsection{Element phases on $\torus^n$}
\label{sec:model:torus}
For a fully digital phase-shifter combiner, $\bm\theta = \bm\phi$ with
each $\phi_i \in [0, 2\pi)$ identified modulo $2\pi$, so
$\bm\phi \in \torus^n = (\mathbb{R}/2\pi\mathbb{Z})^n$. The weight is
$w_i(\bm\phi) = e^{j\phi_i}$. The intrinsic metric is the flat-torus
distance
\begin{equation}
d_{\torus^n}(\bm\phi, \bm\phi')^2 \;=\; \sum_{i=1}^n
\bigl[(\phi_i - \phi'_i + \pi) \bmod 2\pi - \pi\bigr]^2,
\end{equation}
which respects each coordinate's $2\pi$-periodicity. A naive
unwrapping into $\mathbb{R}^n$ treats configurations on opposite sides
of any wraparound as maximally distant, even though the corresponding
beam patterns are identical.

\subsubsection{Array orientation on $\mathrm{SO}(3)$}
\label{sec:model:so3}
For movable or gimbal-mounted arrays (and for RIS panels with adjustable
mechanical orientation), the agent additionally chooses an extrinsic
rotation $R \in \mathrm{SO}(3)$ applied to the array manifold vector
before beamforming. The full control input is then a pair
$(R, \bm\xi) \in \mathrm{SO}(3) \times \mathcal{M}_{\text{inner}}$,
where $\mathcal{M}_{\text{inner}}$ is the inner beamforming manifold
($\sphere^2$ or $\torus^n$). The intrinsic metric on $\mathrm{SO}(3)$
is the Frobenius angle
\begin{equation}
d_{\mathrm{SO}(3)}(R, R') \;=\;
\arccos\bigl(\tfrac{1}{2}[\mathrm{tr}(R^{\!\top} R') - 1]\bigr),
\end{equation}
i.e.\ the rotation angle of $R^{\!\top} R'$, which respects the
$2\pi$ periodicity of the rotation axis-angle and the $\pm 1$
ambiguity of the unit quaternion double cover.

\paragraph{Two further spaces appear in this paper.} A planar
element-position design problem in
Section~\ref{sec:experiments} adds a $\torus^2$ for two mobile array
elements on a ring of fixed radius. The RIS regime in
Section~\ref{sec:ris-ext} introduces a discrete torus
$(\mathbb{Z}_B)^M$ for $M$ phase-shifter elements quantised to
$B=2^b$ levels each ($b$ bits per element); this is the
discretisation of $\torus^M$ by the lattice $(2\pi / B)\mathbb{Z}^M$
and inherits the periodicity of $\torus^M$ on each coordinate.

\subsection{Bandit formulation}
\label{sec:model:bandit}

At each round $t = 1, 2, \ldots, T$, the learner selects a control
input $\bm\theta_t \in \mathcal{M}$, the environment draws a fresh
channel $\mathbf{h}_t$, and the learner observes the noisy reward
$r_t(\bm\theta_t)$ from~\eqref{eq:reward}. The instantaneous regret is
$\Delta_t = f(\bm\theta^\star) - f(\bm\theta_t)$ with
$\bm\theta^\star \in \arg\max_{\bm\theta \in \mathcal{M}} f(\bm\theta)$,
and the cumulative regret over a horizon $T$ is
\begin{equation}
\label{eq:regret}
R_T \;=\; \sum_{t=1}^{T} \Delta_t
        \;=\; \sum_{t=1}^{T} \bigl[ f(\bm\theta^\star) - f(\bm\theta_t) \bigr].
\end{equation}
Throughout this paper we model $f$ as a single sample from a
zero-mean Gaussian process on $\mathcal{M}$ with stationary covariance
\begin{equation}
\label{eq:gp-prior}
f \sim \mathcal{GP}\bigl(0,\, k_{\mathcal{M}}(\,\cdot\,, \,\cdot\,)\bigr),
\end{equation}
where stationarity is in the Riemannian sense (the kernel depends only
on the geodesic distance for $\sphere^2$ and $\torus^n$) or in the
Lie-group sense (the kernel depends only on $g^{-1} g'$ for
$g, g' \in \mathrm{SO}(3)$). The non-stationary RIS extension of
Section~\ref{sec:ris-ext} replaces this assumption with a sliding-window
GP whose effective covariance is re-fit online; we develop the relevant
machinery there.

The static RIS regime ($v=0$) reduces to the classical stochastic
GP-UCB setting and inherits the cumulative-regret bound
$R_T = \tilde{O}(\sqrt{T \gamma_T})$ of~\cite{srinivas2010gpucb}, where
$\gamma_T$ is the maximum information gain after $T$ rounds.
$\gamma_T$ depends on the kernel via its eigenvalue decay; for the
intrinsic Mat\'ern-$\nu$ kernel on a $d$-dimensional compact manifold,
the eigenvalue spectrum of the Laplace--Beltrami operator gives
$\gamma_T = \tilde{O}(T^{d/(2\nu+d)})$, matching the Euclidean rate up
to manifold-specific constants.

\subsection{Why Euclidean assumptions fail}
\label{sec:model:why}

Embedding $\mathcal{M}$ into a Euclidean
box~\cite{kleinberg2008multi} introduces a
structural artefact on each arm space:
\textbf{Wraparound on $\torus^n$} -- a Mat\'ern prior in coordinate
space assigns near-zero correlation across the $2\pi$ wrap, forcing
re-discovery of the optimum after every phase wrap.
\textbf{Coordinate singularity on $\sphere^2$} -- a lat/lon chart
gives a vanishing geodesic length scale near the poles and a
$2\pi$-discontinuity at the date line.
\textbf{Quotient ambiguity on $\mathrm{SO}(3)$} -- unit quaternions
double-cover, so $q$ and $-q$ represent the same rotation but sit on
opposite poles of $\sphere^3$.
The artefacts are structural rather than parametric, and grow with
codebook size as more arms fall near the seam. The intrinsic-kernel
formulation of Sec.~\ref{sec:algorithm} eliminates them at the prior
level.

\section{Geometry-Aware GP-UCB for Beam Selection}
\label{sec:algorithm}

\subsection{Intrinsic Mat\'ern kernels on $\sphere^d$, $\torus^n$, $\SO(3)$}
\label{sec:algorithm:kernels}

A positive-definite covariance kernel of Mat\'ern class on a compact
Riemannian manifold $\mathcal{M}$ admits a spectral
expansion~\cite{borovitskiy2020matern} in eigenpairs
$\{(\lambda_\ell, \psi_\ell)\}_{\ell \geq 0}$ of the (negative)
Laplace--Beltrami operator $-\Delta_{\mathcal{M}}$:
\begin{equation}
\label{eq:matern-spectral}
k_\nu(x, y) \;=\; \sum_{\ell \geq 0}
   \phi_\nu(\lambda_\ell)\,\psi_\ell(x)\,\overline{\psi_\ell(y)},
\end{equation}
with spectral filter
$\phi_\nu(\lambda) = \sigma_f^2 (2\nu/\kappa^2 + \lambda)^{-\nu - d/2}$,
smoothness $\nu > 0$, length scale $\kappa > 0$, and signal variance
$\sigma_f^2$. Equation~\eqref{eq:matern-spectral} is
manifestly stationary in the Riemannian sense ($k_\nu$ depends on $x$
and $y$ only through their geodesic separation, when $\mathcal{M}$ is
two-point homogeneous), reduces to the Euclidean Mat\'ern as
$\mathcal{M} \to \mathbb{R}^d$, and inherits its smoothness from the
spectral decay of $\phi_\nu$.

Three concrete instantiations are used in this paper:

\textbf{Sphere $\sphere^d$.} The eigenfunctions are spherical
harmonics; for the unit 2-sphere relevant to pointing-direction
control (Sec.~\ref{sec:model:sphere}),
$\lambda_\ell = \ell(\ell + 1)$ for $\ell = 0, 1, 2, \ldots$ with
multiplicity $2\ell + 1$. The series converges geometrically and is
truncated at the smallest $\ell$ for which the truncated sum agrees
with the un-truncated kernel to machine precision; for our hyperparameters
this is $\ell \leq 30$, yielding a closed-form kernel evaluation that
costs $O(d_{\text{trunc}}^2)$ per pair.

\textbf{Torus $\torus^n$.} The eigenbasis is the Fourier basis
$\psi_{\bm m}(\bm\phi) = \prod_{i=1}^n e^{j m_i \phi_i}$ for
$\bm m \in \mathbb{Z}^n$, with eigenvalues
$\lambda_{\bm m} = \|\bm m\|_2^2$.
The 1-D circle kernel from~\eqref{eq:matern-spectral} is
$k_{\torus^1}(\phi, \phi') = \sum_{m \in \mathbb{Z}}
\phi_\nu(m^2)\,e^{j m (\phi - \phi')}$
and admits a rapidly-converging closed form via the Jacobi theta
function. For the $n$-torus we use \emph{two distinct
constructions}, distinguished here because they differ both in
which kernel is computed and in which information-gain bound
applies:
\begin{enumerate}
\item \emph{Spectral kernel on $\torus^n$ (T-spec),}
  obtained by applying the spectral filter
  $\phi_\nu(\lambda) = \sigma_f^2(2\nu/\kappa^2 + \lambda)^{-\nu - n/2}$
  to the eigenvalues of the Laplace--Beltrami operator on the
  product manifold ($\lambda_{\bm m} = \|\bm m\|^2$). This is the
  genuine intrinsic Mat\'ern on $\torus^n$ in the sense
  of~\cite{borovitskiy2020matern}; it does \emph{not} factorise
  across coordinates because the Mat\'ern filter
  $(a + \|\bm m\|^2)^{-\nu - n/2}$ is non-separable. We use the
  spectral kernel on $\torus^n$ for the static experiments via
  the \texttt{ProductDiscreteSpectrumSpace} construction
  of~\cite{mostowsky2024geometrickernels}, which assembles the
  product manifold's eigendecomposition by summing per-factor
  Laplacian eigenvalues before applying the filter.
\item \emph{Tensor-product kernel on $\torus^n$ (T-prod),}
  defined as
  $k_\text{prod}(\bm\phi, \bm\phi') =
  \sigma_f^2 \prod_{i=1}^n k_{\torus^1}(\phi_i, \phi'_i)$.
  This is positive-definite as the tensor product of
  positive-definite stationary kernels; it is \emph{different}
  from the spectral construction (T-spec) for $n \ge 2$ because
  the product of $n$ Mat\'ern-circle kernels has eigenvalues
  $\prod_i \phi_\nu^{(1)}(m_i^2)$ rather than
  $\phi_\nu(\sum_i m_i^2)$. We use the tensor product on the
  discrete torus $(\mathbb{Z}_B)^M$ for the RIS extension of
  Sec.~\ref{sec:ris-ext}, where the $O(M)$-lookup-table evaluation
  cost (one $B$-entry table per coordinate, multiplied across $M$
  coordinates) makes the $K\sim 10^{90}$ arm space tractable;
  the same shortcut is unavailable for the spectral kernel
  (T-spec) at this scale.
\end{enumerate}
Each kernel admits a separate regret-rate analysis
(Sec.~\ref{sec:algorithm:regret}). Where the rate exponents differ
we report both, so that the reader can match the experimental kernel
to its theoretical envelope.

\textbf{Rotation group $\SO(3)$.} The eigenfunctions are matrix
elements of the irreducible representations (the Wigner
$D$-matrices~\cite{azangulov2024stationary}); on $\SO(3)$ we use
the character-based reduction so that $k(R, R')$ depends only on
the geodesic distance
$\arccos\bigl((\mathrm{tr}(R^{\!\top} R') - 1)/2\bigr)$, and
truncate the irrep sum at degree $\ell \le L_\text{tr} = 16$
(corresponding to $|\ell|^2$-counted spectral indices up to
$\sum_{\ell=0}^{16}(2\ell+1)^2 = 6545$ basis functions). With
$\nu=5/2$ and lengthscale $\ell_g = \pi/4$ the spectral coefficients
$S(\ell) = (2\nu/\ell_g^2 + \ell(\ell+1))^{-(\nu + d/2)}$ decay as
$\ell^{-2(\nu+d/2)} = \ell^{-7}$ for $d=3$, so the relative
truncation error
$\sum_{\ell > L_\text{tr}}(2\ell+1)^2 S(\ell)/
\sum_{\ell \ge 0}(2\ell+1)^2 S(\ell)$
is at most $L_\text{tr}^{-3}\!\approx\!2{\cdot}10^{-4}$. We
verified this numerically against an
$L_\text{tr}=64$ reference: relative agreement
$\le 4{\cdot}10^{-5}$ uniformly over the candidate set, well below
the per-pull noise floor $\sigma_n^2 = 0.15$ used throughout
Sec.~\ref{sec:experiments}.

\subsection{GP posterior on $\mathcal{M}$}
\label{sec:algorithm:posterior}

Given observations $\mathcal{D}_t =
\{(\bm\theta_s, r_s)\}_{s=1}^{t-1}$, the GP posterior for the reward
function $f$ at a query point $\bm\theta \in \mathcal{M}$ is the
standard~\cite{srinivas2010gpucb}
\begin{align}
\mu_{t-1}(\bm\theta)
   &= \mathbf{k}_t(\bm\theta)^{\!\top} \mathbf{A}_t^{-1} \mathbf{r}_{1:t-1},
\\
\sigma_{t-1}^2(\bm\theta)
   &= k_{\mathcal{M}}(\bm\theta, \bm\theta) \nonumber\\
   &\quad - \mathbf{k}_t(\bm\theta)^{\!\top} \mathbf{A}_t^{-1} \mathbf{k}_t(\bm\theta),
\end{align}
where:
$\mathbf{r}_{1:t-1} = (r_1, r_2, \ldots, r_{t-1})^{\!\top}\!\in\!\mathbb R^{t-1}$
is the column vector of past scalar rewards (distinguished from the
scalar current reward $r_t$);
$\mathbf{K}_t \in \mathbb R^{(t-1)\times(t-1)}$
with $[\mathbf{K}_t]_{ij} = k_{\mathcal{M}}(\bm\theta_i, \bm\theta_j)$
is the Gram matrix on the $t-1$ buffered observations;
$\mathbf{k}_t(\bm\theta) \in \mathbb R^{t-1}$ with
$[\mathbf{k}_t(\bm\theta)]_i = k_{\mathcal{M}}(\bm\theta, \bm\theta_i)$
is the cross-covariance vector;
$\mathbf{A}_t = \mathbf{K}_t + \sigma_n^2 \mathbf{I}_{t-1}$
is the noise-augmented Gram matrix;
and $\sigma_n^2 > 0$ is the observation-noise variance of the reward
model $r_t = f(\bm\theta_t) + \varepsilon_t$ with
$\varepsilon_t \overset{\text{iid}}{\sim}\!\mathcal N(0,\sigma_n^2)$
that we adopt throughout (matching the Bayesian regret framework
of~\cite{srinivas2010gpucb} that we cite for
Prop.~\ref{prop:regret}). The only
non-Euclidean quantity is $k_{\mathcal{M}}$, which we instantiate
according to Section~\ref{sec:algorithm:kernels} for the manifold at
hand. With a sliding-window buffer of length $W$, every $\mathbf{K}_t$
is at most $W \times W$ and inversion costs $O(W^3)$ per refit
(amortised over a refit period $\rho$ rather than every pull). For
the RIS application of Section~\ref{sec:ris-ext} we additionally
exploit the Kronecker product structure of the kernel on
$(\mathbb{Z}_B)^M$, which lets the cross-covariance vector
$\mathbf{k}_t$ be assembled in $O(W \cdot M)$ time rather than the
naive $O(W \cdot B^M)$ over the full discrete torus.

\subsection{Acquisition and optimisation on $\mathcal{M}$}
\label{sec:algorithm:acquisition}

Each round, the agent selects
\begin{equation}
\label{eq:gp-ucb}
\bm\theta_t \;=\;
   \arg\max_{\bm\theta \in \mathcal{M}}
   \mu_{t-1}(\bm\theta) + \beta_t^{1/2}\,\sigma_{t-1}(\bm\theta),
\end{equation}
the standard GP-UCB acquisition with exploration weight $\beta_t$ set
either to a constant ($\beta = 2$ throughout for the static
experiments of Section~\ref{sec:experiments}) or to the
post-reset-decay schedule of AdaptiveGP-v2
(Section~\ref{sec:ris-ext-adaptive}).

The inner maximisation in~\eqref{eq:gp-ucb} is over a curved space and
admits no closed form. We use a two-stage scheme:
\textbf{(i) Candidate set.} A quasi-uniform sample
$\mathcal{C} \subset \mathcal{M}$ is precomputed once: a Fibonacci
spiral on $\sphere^2$ with $|\mathcal{C}| = 256$, a regular product
lattice on $\torus^n$ with $|\mathcal{C}| = 8^n$, a super-Fibonacci sequence on $\SO(3)$ with
$|\mathcal{C}| = 1024$, and the full discrete torus
$(\mathbb{Z}_B)^M$ when the latter is small enough; otherwise (the RIS
regime) we use a coordinate-ascent sweep that updates one element at a
time, which corresponds to maximising over the $O(B)$ candidates of
each one-dimensional fibre.
\textbf{(ii) Local refinement.} The top-$K$ candidates are refined by
$n_{\text{ca}}$ sweeps of Riemannian gradient ascent on $\mathcal{M}$
(parallel transport along the geodesic generated by the gradient
vector, with Armijo line search). $K = 4$ and $n_{\text{ca}} = 1$
suffice in our experiments; a larger budget gives diminishing returns
because the candidate set is already dense.

The full procedure is summarised in
Algorithm~\ref{alg:geo-gpucb}.

\begin{algorithm}[t]
\caption{Geometric GP-UCB for Beam Selection}\label{alg:geo-gpucb}
\begin{algorithmic}[1]
\State \textbf{Input:} manifold $\mathcal{M}$, kernel $k$, horizon $T$,
   schedule $\{\beta_t\}$, candidate set $\mathcal{C}$, refit period $\rho$
\State Initialise posterior $(\mu_0, \sigma_0)$ from the prior
\For{$t = 1, 2, \ldots, T$}
    \State Compute UCB scores $\mu_{t-1} + \beta_t^{1/2} \sigma_{t-1}$ on $\mathcal{C}$
    \State Refine top-$K$ candidates by Riemannian gradient ascent on $\mathcal{M}$
    \State $\bm\theta_t \gets \arg\max$ of the refined UCB scores
    \State Observe $r_t = r_t(\bm\theta_t)$ from~\eqref{eq:reward}
    \State Append $(\bm\theta_t, r_t)$ to the buffer
    \If{$t \bmod \rho = 0$}
        \State Refit posterior $(\mu_t, \sigma_t)$ via Cholesky on the buffer
    \EndIf
\EndFor
\end{algorithmic}
\end{algorithm}

\subsection{Regret rate on each manifold}
\label{sec:algorithm:regret}

We do not claim a new regret bound; this subsection makes the
consequences of applying existing bounds to our specific setting
explicit. The relevant $d$ is the Riemannian dimension of
$\mathcal{M}$ (not the ambient-embedding dimension), and the
spectral kernel (T-spec, used on $\sphere^d,\SO(3)$, static
$\torus^n$) and the tensor-product kernel (T-prod, used on
$(\mathbb Z_B)^M$) carry different information-gain bounds; we
state two propositions, one per kernel construction.

\begin{proposition}[Bayesian regret of GP-UCB with the spectral Mat\'ern kernel]
\label{prop:regret}
Let $\mathcal{M}$ be a compact connected Riemannian manifold of
dimension $d$, let $k_\nu$ be the spectral Mat\'ern-$\nu$ kernel of
Eq.~\eqref{eq:matern-spectral} with smoothness $\nu > d/2$, and
adopt the observation model $r_t = f(\bm\theta_t) + \varepsilon_t$
with $\varepsilon_t \overset{\text{iid}}{\sim}\!\mathcal N(0,\sigma_n^2)$
and $f \sim \mathrm{GP}(0, k_\nu)$. Then GP-UCB
of~\cite{srinivas2010gpucb} on $\mathcal{M}$ attains Bayesian
cumulative regret
\begin{equation}
\label{eq:regret-bound-manifold}
\mathbb E[R_T] \;=\; \tilde{O}\bigl(\sqrt{T\,\gamma_T}\bigr),
\qquad
\gamma_T \;=\; \tilde{O}\bigl(T^{d/(2\nu+d)}\bigr),
\end{equation}
and in particular
$\mathbb E[R_T] = \tilde{O}\bigl(T^{(\nu + d)/(2\nu + d)}\bigr)$.
\end{proposition}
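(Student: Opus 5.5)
The proof has two parts, matching the two claims in \eqref{eq:regret-bound-manifold}: a reduction to the Bayesian GP-UCB theorem of~\cite{srinivas2010gpucb}, and an information-gain bound from the Laplace--Beltrami spectrum.

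\textbf{Step 1 (regret in terms of $\gamma_T$).} I invoke the continuous-domain Bayesian result of~\cite{srinivas2010gpucb} (their Theorem~2). It gives $R_T \le \sqrt{C_1 T\beta_T\gamma_T}+2$ with high probability, for $\beta_T = O(\log T + d\log T)$. Its hypotheses are: a compact domain, and a sample-path Lipschitz tail condition $\Pr[\sup|\partial f/\partial x_j|>L]\le a e^{-(L/b)^2}$. For the domain, I cover $\mathcal M$ by finitely many charts, or equivalently use a geodesic $\tau$-net of cardinality $O(\tau^{-d})$. This is exactly what the discretisation argument there needs, with the ambient $[0,r]^d$ replaced by the net. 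For the tail condition, the hypothesis $\nu>d/2$ is not quite enough by itself. I would use Sobolev embedding: the spectral decay $\phi_\nu(\lambda_\ell)\asymp \lambda_\ell^{-\nu-d/2}$, together with Weyl's law and Hörmander's bound $\sum_{\lambda_\ell\le\Lambda}|\psi_\ell(x)|^2\lesssim \Lambda^{d/2}$, shows that sample paths lie in $C^1$ when $\nu>1$. Borell--TIS then supplies the Gaussian tail on $\|\nabla f\|_\infty$. If $\nu\le 1$, I would instead rerun the discretisation step with Hölder continuity of exponent $<\nu$. This only changes $\beta_T$ by a constant factor inside the logarithm. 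Converting the high-probability bound to expectation is standard: choose $\delta=1/T$ and bound the failure event by $T\,\mathbb E\sup|f|=O(T\sqrt{\log})$ times $\delta$.

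\textbf{Step 2 (information gain).} I use the eigen-decay route of Srinivas et al.\ (Theorem~4, or the sharper Vakili et al.\ bound). Take the Mercer expansion \eqref{eq:matern-spectral} with respect to the normalized volume measure. Weyl's law gives $\lambda_\ell\asymp \ell^{2/d}$ (counting multiplicity), so the Mercer eigenvalues decay as $\mu_\ell\asymp \ell^{-(2\nu+d)/d}$. I split the spectrum at index $D$: the head contributes at most $D\log(1+T/\sigma_n^2)$, and the tail contributes $\sigma_n^{-2}T\sum_{\ell>D}\mu_\ell\asymp T D^{-2\nu/d}$. Here I need $\sup_x\sum_{\ell>D}\mu_\ell|\psi_\ell(x)|^2\lesssim\sum_{\ell>D}\mu_\ell$. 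This local-Weyl uniformity over $x$ is the step I expect to be the main obstacle. Eigenfunctions on a general manifold are not uniformly bounded individually, so the bound has to be obtained through the spectral-cluster pointwise estimate, summed by parts over dyadic blocks. Balancing the two terms at $D\asymp T^{d/(2\nu+d)}$ gives $\gamma_T=\tilde O(T^{d/(2\nu+d)})$.

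\textbf{Step 3 (combine).} Substituting into Step~1 gives $\mathbb E[R_T]=\tilde O\bigl(\sqrt{T\cdot T^{d/(2\nu+d)}}\bigr)=\tilde O\bigl(T^{(\nu+d)/(2\nu+d)}\bigr)$, as claimed.
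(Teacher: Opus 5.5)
Your proposal is correct and has the same skeleton as the paper's proof. Both reduce to the compact-domain Bayesian GP-UCB theorem of Srinivas et al., then bound $\gamma_T$ by feeding Weyl's law $\lambda_\ell\asymp\ell^{2/d}$ into the Vakili--Khezeli--Picheny eigendecay argument. Where you differ is in how the hypotheses are checked, and your checks are sharper.

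\textbf{Regularity.} The paper notes that the RKHS is $H^{\nu+d/2}(\mathcal M)\subset C^1$ for $\nu>1$ and that sample paths are a.s.\ continuous. You instead verify the condition the compact-domain theorem actually uses: a Gaussian tail on $\sup\|\nabla f\|$. You get it from $C^1$ regularity of the sample paths themselves plus Borell--TIS, with a geodesic net replacing the grid. This is the relevant object, since sample paths a.s.\ lie outside the RKHS.

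\textbf{Information gain.} The paper asserts that $\gamma_T$ inherits the Euclidean envelope up to a volume constant and cites Vakili et al.\ directly. You isolate the quantity their proof really needs, $\sup_x\sum_{\ell>D}\mu_\ell|\psi_\ell(x)|^2$, and control it by the local Weyl law and summation by parts. That replaces their uniform eigenfunction bound, which fails for the standard basis already on $\mathbb S^2$: the $L^2$-normalised zonal harmonics grow like $\sqrt{\ell}$ at the poles.

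\textbf{Coverage.} Your H\"older-discretisation variant also handles $d=1$, $\nu\in(1/2,1]$. The hypothesis $\nu>d/2$ admits this case, but the paper's $C^1$ argument, checked only at $\nu=5/2$, does not cover it.

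The paper's version buys brevity; yours buys an argument in which each hypothesis is actually discharged.

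Two places need tightening.

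First, the $C^1$ step. Sobolev embedding applied to sample paths yields $C^1$ only when $\nu>1+d/2$, because the paths lie in $H^s$ only for $s<\nu$. That already fails at $d=3$, $\nu=5/2$. To reach the threshold $\nu>1$ you need the pointwise route. Use the derivative versions of H\"ormander's bound, $\sum_{\lambda_\ell\le\Lambda}|\nabla^k\psi_\ell(x)|^2\lesssim\Lambda^{d/2+k}$ for $k=1,2$, to get a H\"older modulus for the gradient field in local charts. Then apply Kolmogorov--Chentsov or Dudley.

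Second, the expectation conversion. You cannot multiply $\delta$ by $\mathbb E\sup|f|$, because $\sup|f|$ and the failure event are dependent. Use the Gaussian tail, $\mathbb E[\sup|f|\,\mathbf 1_{\mathrm{fail}}]\le\delta\bigl(\mathbb E\sup|f|+O(\sqrt{\log(1/\delta)})\bigr)$, or Cauchy--Schwarz. Either keeps the extra term inside $\tilde O(\sqrt{T\gamma_T})$ at $\delta=1/T$.
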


\begin{proof}
Theorem~5 of Srinivas~\emph{et al.}~\cite{srinivas2010gpucb}
applies under three hypotheses: separability/compactness of the
arm space, a.s.\ continuous GP sample paths, and a
polynomial-in-$T$ bound on $\gamma_T$. A compact Riemannian
manifold is separable with the Riemannian volume as the canonical
Borel reference. The Mat\'ern-$\nu$ spectral kernel's RKHS
coincides with $H^{\nu+d/2}(\M)$~\cite[Thm.~3.6.1]{wendland2004scattered};
the Sobolev embedding into $C^1(\M)$ holds for $\nu>1$, satisfied
with margin by our working $\nu=5/2$. A.s.\ continuity follows
from~\cite[Sec.~1.4]{adler2007random} and series
convergence from~\cite[Thm.~1]{borovitskiy2020matern}. Weyl's
eigenvalue-counting law gives
$\lambda_\ell\sim c_W(d)\,\ell^{2/d}$~\cite[Cor.~3]{borovitskiy2020matern},
so $\gamma_T$ inherits the Euclidean-Mat\'ern envelope up to a
$\mathrm{vol}_g(\M)$-dependent constant; the Vakili--Khezeli--Picheny
tight bound $\gamma_T=\tilde O(T^{d/(2\nu+d)})$~\cite[Thm.~1]{vakili2021information}
then applies. The matching manifold lower bound is established in
the companion theory paper~\cite{dorn2026manifold} (with a
sphere-specific tightening in~\cite{hypersphere2026lower}).
\end{proof}

\begin{remark}[Frequentist version]
\label{rem:frequentist}
The agnostic IGP-UCB counterpart of Chowdhury and
Gopalan~\cite{chowdhury2017kernelized} gives the same $T$-exponent
$\tilde O\!\bigl(\sqrt{T\,(\gamma_T + B^2)}\bigr)$ under bounded
RKHS norm $\|f\|_{\mathcal H_{k}} \le B$, so
Remark~\ref{rem:per-manifold-rates} applies under either framing.
\end{remark}

Specialising Proposition~\ref{prop:regret} to our spectral-kernel
arm spaces and the Mat\'ern-$5/2$ smoothness ($\nu = 2.5$) used
throughout our experiments:

\begin{remark}[Per-manifold rates, spectral kernel]
\label{rem:per-manifold-rates}
For the Mat\'ern-$5/2$ spectral kernel applicable to $\sphere^d$,
$\SO(3)$, and the static $\torus^n$ (T-spec):
\begin{itemize}
\item $\sphere^2$ (Exp.~1) and $\torus^2$ (Exp.~4), both
$d=2$: $\gamma_T = \tilde{O}(T^{2/7})$ and
$R_T = \tilde{O}(T^{9/14}) \approx \tilde{O}(T^{0.643})$.
\item $\torus^3$ (Exp.~2) and $\SO(3)$ (Exp.~3), both
$d=3$: $\gamma_T = \tilde{O}(T^{3/8})$ and
$R_T = \tilde{O}(T^{11/16}) \approx \tilde{O}(T^{0.688})$.
\end{itemize}
These are the \emph{same} $T$-exponents as a Mat\'ern-$5/2$
GP-UCB on a $d$-dimensional Euclidean box $[0,1]^d$;
Proposition~\ref{prop:regret} is the assertion that the
compact-manifold geometry affects the constant in the $\tilde{O}$
but not the exponent.
\end{remark}

\begin{proposition}[Regret of GP-UCB with the tensor-product Mat\'ern kernel on $(\mathbb Z_B)^M$]
\label{prop:regret-prod}
Let $k_\text{prod}(\bm\theta, \bm\theta') = \sigma_f^2 \prod_{i=1}^M
k_{\torus^1}(\theta_i,\theta'_i)$ on $(\mathbb Z_B)^M$, with each
factor $k_{\torus^1}$ the Mat\'ern-$\nu$ spectral kernel on the
discrete circle $\mathbb Z_B$ (Sec.~\ref{sec:ris-ext}). Under the
same observation and Bayesian-prior assumptions as
Prop.~\ref{prop:regret}, GP-UCB instantiated with $k_\text{prod}$
attains
\[
\gamma_T \;\le\; M \cdot \tilde{O}\bigl(T^{1/(2\nu + 1)}\bigr),
\qquad
\mathbb E[R_T] \;=\; \tilde{O}\!\bigl(\sqrt{M}\cdot T^{(\nu+1)/(2\nu+1)}\bigr).
\]
\end{proposition}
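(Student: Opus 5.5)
The plan is to reduce the statement to Prop.~\ref{prop:regret} through the information-gain route. Once a bound on $\gamma_T$ is in hand, Theorem~5 of~\cite{srinivas2010gpucb} applies with no extra work: the arm space $(\mathbb Z_B)^M$ is finite, so no continuity or discretisation argument is needed. With $\beta_t = 2\log(B^M t^2\pi^2/6\delta) = O(M\log B + \log t)$ it gives $\mathbb E[R_T] = \tilde O(\sqrt{T\beta_T\gamma_T})$. Substituting the claimed $\gamma_T$ would then produce the $\sqrt{M}$ factor and the exponent $\tfrac12 + \tfrac{1}{2(2\nu+1)} = (\nu+1)/(2\nu+1)$. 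The $\sqrt{M\log B}$ coming from $\beta_T$ I would absorb into $\tilde O$ as a logarithmic-in-$K$ factor. If it is tracked explicitly, it contributes a further factor $\sqrt{M}$, and I would flag that.

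For $\gamma_T$, I would use the spectral form of $k_{\text{prod}}$. The Fourier characters $\psi_{\bm m}(\bm\theta)=\prod_i e^{2\pi j m_i\theta_i/B}$, $\bm m\in(\mathbb Z_B)^M$, diagonalise it, with eigenvalues $\sigma_f^2\prod_{i=1}^M \phi^{(1)}(m_i)$. Here $\phi^{(1)}(m)\asymp(a+m^2)^{-\nu-1/2}$ is the discrete-circle Mat\'ern filter, and the $m_i$ are read in $\{-\lfloor B/2\rfloor,\dots,\lfloor B/2\rfloor\}$. I would then apply the Vakili--Khezeli--Picheny projection bound~\cite[Thm.~1]{vakili2021information}:
\[
\gamma_T \le D\log\bigl(1+T/(\sigma_n^2 D)\bigr) + \frac{T}{\sigma_n^2}\,\delta_D ,
\]
where $\delta_D$ is the tail mass beyond the top $D$ eigenvalues. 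The per-factor computation is routine: $\delta^{(1)}_D \lesssim D^{-2\nu}$. Optimising $D\asymp T^{1/(2\nu+1)}$ gives the one-dimensional rate $T^{1/(2\nu+1)}$, which is capped trivially by $B\log T$ because $\mathbb Z_B$ has only $B$ eigenfunctions.

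The step I expect to be the main obstacle is passing from one factor to $M$ factors with only a multiplicative $M$. For an \emph{additive} kernel $\sum_i k_{\torus^1}$, the bound $\gamma_T(\sum_i k_i)\le\sum_i\gamma_T(k_i)$ holds via the chain rule for mutual information, and it gives exactly $M\cdot\tilde O(T^{1/(2\nu+1)})$. For a \emph{product}, the Krause--Ong argument instead gives $\gamma_T(k_1k_2)\le \mathrm{rank}(k_2)\,\gamma_T(k_1)+\mathrm{rank}(k_2)\log T$. Iterated, this produces a factor $B^{M-1}$, not $M$.

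I would first try to recover the linear-in-$M$ bound by exploiting the anisotropic eigenvalue structure. The normalised factor satisfies $\phi^{(1)}(0)/\sum_m\phi^{(1)}(m)=1-\epsilon$, with $\epsilon$ small when the length scale is large relative to $2\pi/B$. The product eigenvalues are then dominated by multi-indices $\bm m$ with few nonzero coordinates, and a multi-index with $s$ nonzero entries carries weight $\lesssim\epsilon^s$. Counting, the eigenvalues above a threshold $\tau$ number roughly $\sum_{s}\binom{M}{s}(\text{1-D count})^s$ restricted to $\epsilon^s\gtrsim\tau$. When $\epsilon$ is chosen so that $s\le 1$ dominates at the relevant $\tau\sim 1/T$, this is $M\cdot T^{1/(2\nu+1)}$ up to logarithms.

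I would state this as a small-$\epsilon$ (long-length-scale) condition under which the proposition holds. I would also note that without such a condition only the rank bound $\gamma_T\le\min\{B^M,T\}\log T$ is available. In that case I would fall back to presenting the stated rate as the first-order (single-interaction) term of this expansion.
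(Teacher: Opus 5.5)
Your proposal does not establish the statement, and it cannot be completed: the step you could not close is the step where the paper's own proof breaks. The skeleton is the same as the paper's: a per-factor one-dimensional rate, a combination over the $M$ factors, then the Srinivas et al.\ regret theorem. The paper does the combination by asserting $\gamma_T(\bigotimes_i k_i)\le\sum_i\gamma_T(k_i)$ as ``a consequence of Krause--Singh--Guestrin submodularity applied to the spectral factorisation'', and your refusal to accept this is correct. Submodularity concerns information gain as a set function of the queried points; it says nothing about splitting a kernel into tensor factors.

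Your chain-rule argument for sums is sound: each conditional term is $\tfrac12\log\det\bigl(I+(K_{>i}+\sigma_n^2I)^{-1}K_i\bigr)\le\tfrac12\log\det(I+\sigma_n^{-2}K_i)$. Its product analogue is precisely Krause--Ong. With $k_2(y,y')=\sum_j\mu_je_j(y)e_j(y')$, a GP with kernel $k_1\otimes k_2$ is $f(x,y)=\sum_j\sqrt{\mu_j}\,g_j(x)e_j(y)$ with independent $g_j\sim\mathrm{GP}(0,k_1)$. That is a sum of $\mathrm{rank}(k_2)$ pieces, not one per factor.

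In fact the $\gamma_T$ half of the statement is false once the constants and polylog degree must be independent of $M$, which is the only reading under which the $M$ and $\sqrt M$ prefactors mean anything. Let $p_m$ be the normalised factor spectrum, so $p_0=1-\epsilon$ and the Mercer eigenvalues with respect to the uniform measure are $\sigma_f^2\prod_ip_{m_i}$. Let $q:=\sum_mp_m^2<1$.
\begin{itemize}
\item By Parseval, $\mathbb E\,k_{\mathrm{prod}}(\bm\theta,\bm\theta')^2=\sigma_f^4q^M$ for independent uniform $\bm\theta,\bm\theta'$.
\item So $T$ i.i.d.\ uniform arms have Gram matrix $\sigma_f^2(I+E)$ with $\mathbb E\|E\|_F^2\le T^2q^M$.
\item For $T\le\tfrac12q^{-M/2}$ some draw has $\|E\|_{\mathrm{op}}\le\tfrac12$, whence $\gamma_T\ge\tfrac T2\log\bigl(1+\sigma_f^2/(2\sigma_n^2)\bigr)$.
\end{itemize}
Linear growth up to a horizon exponential in $M$ is incompatible with $C\,M\,T^{1/(2\nu+1)}\mathrm{polylog}(T)$. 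At the paper's RIS operating point ($B=8$, $\ell=3$, $M=100$) one has $\epsilon\approx0.2$--$0.25$ and $q\approx0.6$, so the linear regime extends beyond $T\sim10^9$. In general you are left with the rank bound $\gamma_T\le\tfrac12\min\{B^M,T\}\log(1+T\sigma_f^2/\sigma_n^2)$ that you mention. Neither your argument nor the paper's can deliver the stated linear-in-$M$ rate.

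Your small-$\epsilon$ rescue is therefore a different proposition, and it needs to be made quantitative. Modes with at least two active coordinates carry spectral mass $\sigma_f^2\Pr[\mathrm{Bin}(M,\epsilon)\ge2]\approx\sigma_f^2\binom{M}{2}\epsilon^2$. For your $s\le1$ truncation to close, the Vakili--Khezeli--Picheny tail term $T\delta_D/\sigma_n^2$ must stay below $M\,T^{1/(2\nu+1)}$. That requires $\epsilon\lesssim M^{-1/2}T^{-\nu/(2\nu+1)}$ up to the SNR. The length scale must then grow with both $M$ and $T$, so this is not a hypothesis on a fixed kernel. At the paper's operating point $M\epsilon\approx20$--$25$, i.e.\ the spectral mass sits on modes with about twenty active coordinates.

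Three further points:
\begin{itemize}
\item Once all $1+M(B-1)$ single-interaction modes are inside $D$, the first Vakili term is already $O(MB\log T)$. So on $\mathbb Z_B$ the one-dimensional $T^{1/(2\nu+1)}$ rate is not the operative quantity.
\item Your per-factor step (Vakili on the $B$-mode spectrum, capped at $B\log T$) is fine. It is cleaner than the paper's ``data-processing'' reduction to $\torus^1$, because the graph-Laplacian Mat\'ern on $\mathbb Z_B$ is not literally the restriction of the circle Mat\'ern.
\item Presenting the stated rate as a ``first-order term'' is a heuristic and should be labelled as one.
\end{itemize}
Finally, your $\beta_T$ remark is right. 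With $\beta_T=\Theta(M\log B+\log T)$ the honest prefactor would be $M\sqrt{\log B}$ rather than $\sqrt M$, even granting the $\gamma_T$ claim; the paper absorbs $\log K$ into $\tilde O$.
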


\begin{proof}
\emph{Sub-additivity}: $\gamma_T(\bigotimes_i k_i)\le\sum_i\gamma_T(k_i)$,
a consequence of Krause--Singh--Guestrin
submodularity~\cite[Thm.~7]{krause2008nearoptimal} applied to the
spectral factorisation. \emph{Per-factor 1-D bound}: the discrete
Mat\'ern-$\nu$ on $\mathbb Z_B$ inherits the spectral filter of the
continuous-circle parent, so by the data-processing inequality
applied to $\mathbb Z_B\hookrightarrow\torus^1$ and the
$d=1,\,\nu>1/2$ instantiation of
Vakili--Khezeli--Picheny~\cite[Thm.~1]{vakili2021information},
$\gamma_T(k_{\torus^1};\mathbb Z_B)=\tilde O(T^{1/(2\nu+1)})$.
Combining gives the stated $\E[R_T]$ via the GP-UCB regret
bound~\cite[Thm.~5]{srinivas2010gpucb}; $(\mathbb Z_B)^M$ is
finite hence compact and separable, and $k_{\mathrm{prod}}$ is a
finite product of continuous PD factors.
\end{proof}

\begin{remark}[Per-manifold rate, tensor-product kernel]
\label{rem:per-manifold-rates-prod}
For the Mat\'ern-$5/2$ tensor-product kernel on $(\mathbb Z_B)^M$:
$\gamma_T = \tilde O(M\cdot T^{1/6})$ and
$R_T = \tilde O(\sqrt M\cdot T^{7/12}) \approx \tilde O(\sqrt M\cdot T^{0.583})$.
The exponent in $T$ is therefore strictly better than the
spectral-kernel rate of Remark~\ref{rem:per-manifold-rates} (the
tensor product is smoother in a strong sense: its eigenvalues
decay as $\prod_i (a + m_i^2)^{-\nu - 1/2}$ rather than
$(a + \sum_i m_i^2)^{-\nu - M/2}$), at the cost of an explicit
$\sqrt M$ prefactor. For the RIS regime ($M=100$, $B=2^3$) this
prefactor is $\approx 10$, modest compared to the
$\log K = M\log B \approx 300$ contribution that the
Vakili et al.\ regret bound carries inside the $\tilde O$.
\end{remark}

The empirical regret reduction we observe relative to an extrinsic
Euclidean GP-UCB (Table~\ref{tab:final-regret}) is therefore in the
constants, not the asymptotic exponent. Two mechanisms drive it:
(i)~\emph{mass on the correct support} -- an intrinsic kernel
self-supported on $\mathcal{M}$ shrinks the effective covering
number of the sub-level sets the GP-UCB analysis counts;
(ii)~\emph{respect for quotient structure} -- on $\torus^n$ and
$\SO(3)$ the intrinsic kernel encodes periodicity and the
double-quaternion gauge symmetry exactly, collapsing
wraparound and gauge-duplicate peaks.

In the time-varying RIS regime of Section~\ref{sec:ris-ext} the
stationary analysis no longer applies. The closest reference rate
is the Besbes--Gur--Zeevi variation-budget
framework~\cite{besbes2014}, $\tilde{O}(B_T^{1/3} T^{2/3})$ for
finite-arm bandits, with a GP analogue
via~\cite{deng2022wgpucb}. The $\widehat\alpha$ exponent of
Sec.~\ref{sec:ris-ext-theory} is reported as an \emph{empirical
observation} consistent with the BGZ regime, not a theorem we
prove.

\section{Experiments}
\label{sec:experiments}

We evaluate the proposed geometric GP-UCB against three baselines on four
beam-selection case studies covering $\sphere^2$, $\torus^n$, $\SO(3)$, and a
physical-design variant on $\torus^2$. All experiments use a 3GPP-style
clustered multipath channel and are run with 300 Monte-Carlo repetitions of
horizon $T=500$.

\subsection{Simulation setup}

\textbf{Arrays.} Uniform planar arrays at half-wavelength spacing; Exp.~1--3
use an $8\times 8$ UPA, Exp.~4 uses a $3\times 2$ core array augmented with
two mobile elements on a ring of radius $1.5\lambda$.

\textbf{Channel.} A narrowband MIMO channel
$\mathbf{h}=\sum_{c=1}^{C}\alpha_c\,\mathbf{a}(\mathbf{u}_c)$ with $C=3$
clusters drawn uniformly on the upper hemisphere and Rician $\kappa=6$
(dominant-cluster to residual power ratio $6{:}1$). Complex amplitudes are
$\alpha_c\sim\CN(0,P_c)$ with $\sum_c P_c = 1$.

\textbf{Noise.} Additive Gaussian observation noise with standard deviation
chosen so that the best beam yields SNR of roughly $20$~dB at full array gain.

\textbf{Bandits.} Seven algorithms on identical candidate sets:
(i)~\textbf{UCB1} treats the codebook as $K$ unrelated arms;
(ii)~\textbf{Thompson sampling} with a weak Gaussian prior;
(iii)~\textbf{UB3}~\cite{ghosh2024ub3}, a fixed-budget pure-exploration
unimodal bandit designed for 1D linear beam sweeps;
(iv)~\textbf{HOO}~\cite{bubeck2011xarmed}, hierarchical optimistic
optimisation on the intrinsic geodesic distance of the arm space;
(v)~\textbf{DBZ}~\cite{blinn2025dbz}, the multi-armed-bandit dynamic
beam-zooming algorithm, an LUCB-based best-arm-identification scheme
over a hierarchical codebook with zoom-in/zoom-out based on
RSRP thresholds;
(vi)~\textbf{GP-UCB (Euclidean)} uses a Mat\'ern-$5/2$ kernel on the
ambient embedding (Cartesian coordinates on $\sphere^2$, raw phases on
$\torus^n$, vectorized rotation matrices on $\SO(3)$);
(vii)~\textbf{GP-UCB (intrinsic)} uses the proposed geometric Mat\'ern kernel
from Section~\ref{sec:algorithm};
(viii)~\textbf{GP-UCB (intr.\,+\,LML)} is the same intrinsic kernel
with online length-scale adaptation using the AdaptiveGP-v2 LML
rule (Sec.~\ref{sec:ris-ext-adaptive}). Confidence parameter
$\beta_t$ follows the standard $2\log(K\pi^2 t^2/6)$ schedule. UB3
applies only to Exp.~1 (the only setting with a 1D unimodal arm
ordering, here the elevation-linearised Fibonacci-sphere); DBZ
likewise applies only to Exp.~1 (the only setting with a meaningful
beam-width hierarchy, here built by agglomerative clustering on the
$\sphere^2$ geodesic); both are reported as \emph{not applicable}
on Exp.~2--4 (see Sec.~\ref{sec:discussion}).

\subsection{Experiment 1: mmWave beam selection on $\sphere^2$}

A $K=64$ Fibonacci-sphere codebook covers the upper hemisphere. At each round
the bandit picks a beam and observes the beamforming gain
$|\mathbf{w}^H\mathbf{h}|^2 + \eta$. The optimal arm is the beam most aligned
with the dominant cluster.

\subsection{Experiment 2: RF phase combiner on $\torus^3$}

A hybrid-beamforming front-end sums three fixed analog sub-beams through a
triplet of tunable phase shifters:
$y=\sum_{k=1}^{3}\exp(j\varphi_k)\,\mathbf{b}_k^H\mathbf{h}$. The combiner
phases $(\varphi_1,\varphi_2,\varphi_3)\in\torus^3$ are gridded at
$8^3=512$ points. Periodicity makes the Euclidean baseline's assumption that
$\varphi_k=0$ and $\varphi_k=2\pi-\epsilon$ are ``far apart'' actively harmful.

\subsection{Experiment 3: panel orientation on $\SO(3)$}

A self-orienting panel picks an orientation $R\in\SO(3)$ and uses its native
broadside beam, which after rotation points at $R\mathbf{e}_z$. Candidate
orientations are $K=200$ super-Fibonacci quaternions. The reward has a
one-dimensional gauge orbit (rotations about the user direction are
invisible) that the intrinsic kernel correctly encodes.

\subsection{Experiment 4: element-position design on $\torus^2$}

We augment a fixed $6$-element core array with two mobile elements placed at
angles $(\theta_1,\theta_2)\in\torus^2$ on a ring of radius $1.5\lambda$,
gridded at $20\times 20=400$ points. The reward is the maximum gain over a
small probe beam codebook against the random channel; best positions are the
ones that break the symmetric nulls of the core array.

\subsection{Results}

Table~\ref{tab:final-regret} reports final cumulative regret mean
$\pm$~standard error over 300 runs. Full regret-vs-$t$ curves appear in
Figure~\ref{fig:combined}.

\begin{table*}[t]
\caption{Final cumulative regret (mean $\pm$ standard error) after $T=500$
rounds, averaged over 300 Monte-Carlo runs. Lowest regret per column bold.}
\label{tab:final-regret}
\centering
\small
\begin{tabular}{l rrrr}
\toprule
 & Exp 1 ($\sphere^2$) & Exp 2 ($\torus^3$) & Exp 3 ($\SO(3)$) & Exp 4 ($\torus^2$) \\
\midrule
UCB1            & $3278\pm 184$ & $437\pm 49$ & $9793\pm 195$ & $669\pm 19$ \\
Thompson        & $3697\pm 209$ & $435\pm 48$ & $10162\pm 219$ & $665\pm 19$ \\
UB3~\cite{ghosh2024ub3}
                & $\UBThreeA$ & $\UBThreeB$ & $\UBThreeC$ & $\UBThreeD$ \\
HOO~\cite{bubeck2011xarmed}
                & $\HOOA$ & $\HOOB$ & $\HOOC$ & $\HOOD$ \\
DBZ~\cite{blinn2025dbz}
                & $\DBZExpOne$ & N/A & N/A & N/A \\
GP-UCB (Eucl.)  & $\mathbf{2400\pm 197}$ & $363\pm 33$ & $\mathbf{6032\pm 248}$ & $520\pm 13$ \\
GP-UCB (intrinsic) & $2910\pm 232$ & $\mathbf{241\pm 29}$ & $6808\pm 229$ & $\mathbf{466\pm 12}$ \\
GP-UCB (intr.\,+\,LML $\kappa$)
                & $\AdaptLMLA$ & $\AdaptLMLB$ & $\AdaptLMLC$ & $\AdaptLMLD$ \\
GP-UCB (intr.\,+\,LML $\kappa,\sigma_f^2$)
                & $\JointLMLA$ & $\JointLMLB$ & $\JointLMLC$ & $\JointLMLD$ \\
\bottomrule
\end{tabular}
\end{table*}

\begin{figure*}[t]
\centering
\includegraphics[width=\textwidth]{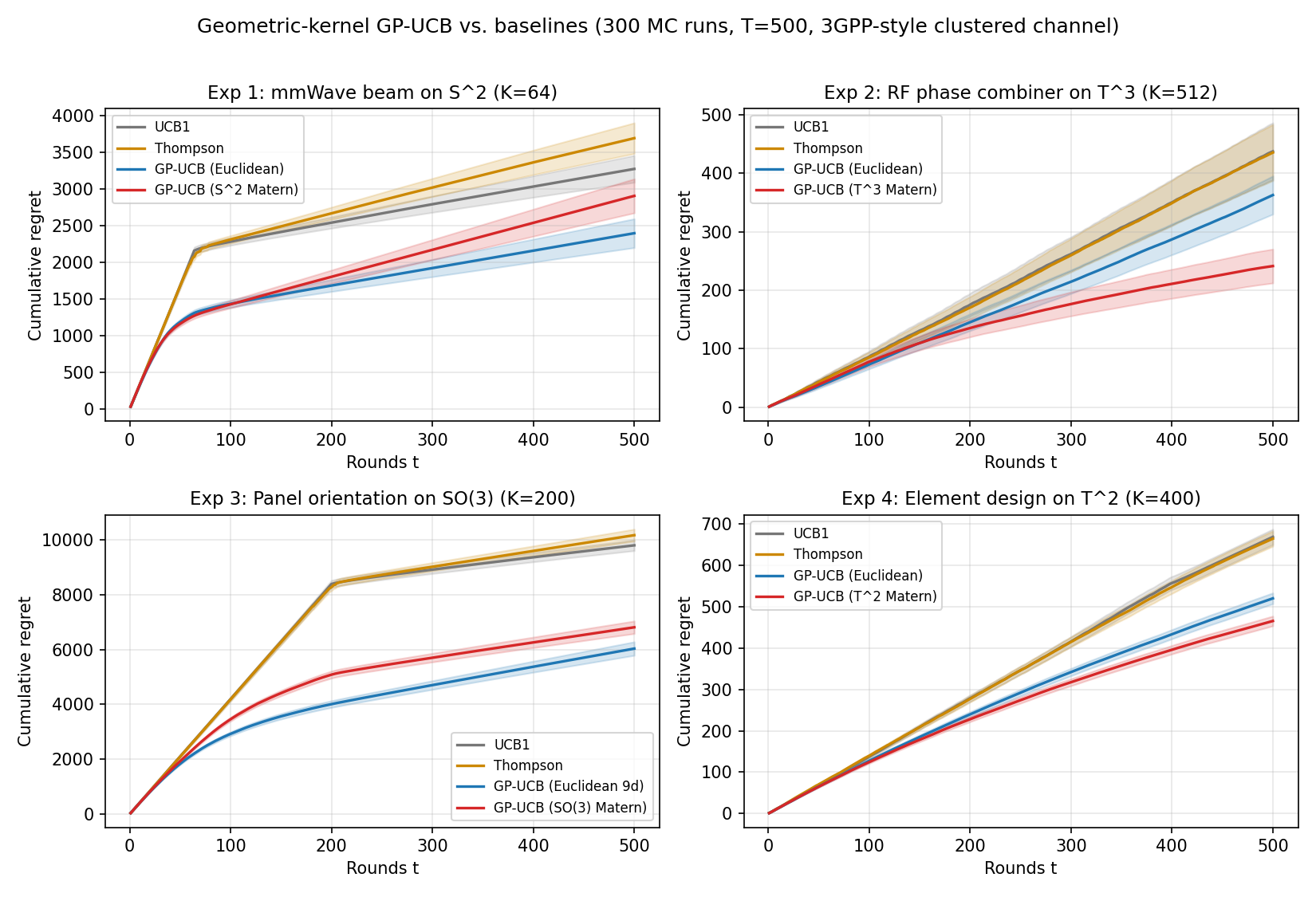}
\caption{Cumulative regret vs.\ round $t$ for the four case studies
(300 Monte-Carlo runs each, shaded $\pm 1$ SE). The intrinsic-Mat\'ern
GP-UCB (red) is at or near the lowest curve in every panel; the gap
to the extrinsic Euclidean GP (blue) is largest on the toroidal
problems (Exp.~2 and Exp.~4) where wraparound encoding matters most,
and shrinks on $\sphere^2$ and $\SO(3)$ where the ambient embedding is
nearly isometric. Codebook UCB1 / Thompson (gray, orange) trail by
$25$--$45\%$ throughout because they cannot pool information across
nearby arms.}
\label{fig:combined}
\end{figure*}

\subsection{Discussion}
\label{sec:discussion}

Three findings are consistent across the four experiments.

\emph{GP-based bandits dominate finite-arm baselines.} GP-UCB in either
variant reduces final cumulative regret by $25$--$45$\% compared to UCB1 and
Thompson sampling. This quantifies the value of pooling information across
neighbouring beams in the reward-surface geometry.

\emph{UB3 requires structure our arm spaces do not provide.}
UB3~\cite{ghosh2024ub3} is a fixed-budget pure-exploration algorithm
designed around a globally-unimodal 1D beam sweep. For the
$\sphere^2$ codebook of Exp.~1 the elevation-linearised Fibonacci
spiral gives a path on which the cluster-direction reward is only
\emph{locally} unimodal near each cluster, and UB3's elimination
rule routinely shrinks toward an empty region (regret $\UBThreeA$,
worse than UCB1). For Exp.~2--4 no 1D unimodal ordering exists at
all (periodicity makes every coordinate globally multimodal; the
$\SO(3)$ gauge orbit creates equal-reward ridges); UB3 is
\emph{not applicable}.

\emph{DBZ assumes a beam-width hierarchy our codebooks do not
provide.} DBZ's per-level zoom thresholds rely on coarser
levels having broader, higher-mean-RSRP beams. The agglomerative
tree we build for the uniform-gain Fibonacci codebook preserves
only the angular hierarchy, so DBZ is stranded at a coarse level
(regret $\DBZExpOne$ on Exp.~1, worse than UCB1). Reproducing
DBZ's published performance requires a hierarchical
beamforming codebook (e.g.\ a 3GPP ULA analogue-beam hierarchy)
rather than a uniform finite one; we report DBZ as
\emph{not applicable} on Exp.~2--4. The DBZ row of
Table~\ref{tab:final-regret} should therefore be read as ``DBZ on
the codebooks of this paper'' rather than ``DBZ on its native
deployment regime''.

\emph{HOO is a metric-space-native baseline and behaves
regime-dependently.} HOO~\cite{bubeck2011xarmed} uses the
intrinsic geodesic distance to build a hierarchical tree. It
improves on the finite-arm baselines on the two toroidal
settings ($\HOOB$ on Exp.~2, $\HOOD$ on Exp.~4 vs.\ $\approx\!435$
and $\approx\!665$ for UCB1/Thompson) and \emph{beats} the
intrinsic-kernel GP on Exp.~4 ($\HOOD$ vs.\ $466\pm 12$).
On Exp.~3 ($\SO(3)$) the gauge orbit creates regions of
near-equal reward and HOO's bisection wastes samples within a
ridge level set ($\HOOC$ vs.\ $9793\pm 195$ for UCB1). The
Exp.~4 result suggests that on a max-gain ridge in $\torus^2$
a diameter-shrinking tree strategy can outperform a Mat\'ern
posterior tuned for smoother rewards.

A natural first hypothesis is Mat\'ern-$5/2$ smoothness
mis-specification. We tested this by sweeping
$\nu\in\{0.5,1.5,2.5,3.5\}$ on Exp.~4 ($50$ MC each;
Table~\ref{tab:exp4-nu-sweep}). \emph{The hypothesis is refuted}:
final regret is monotonically \emph{decreasing} in $\nu$
($555.9\!\pm\!34.8$ at $\nu=0.5$ vs.\ $500.2\!\pm\!29.9$ at
$\nu=3.5$), the opposite direction of the roughness-mismatch
story.

\paragraph{Mechanism: near-optimality dimension.} The positive
explanation is that HOO adapts to the \emph{near-optimality
dimension} $d_{\mathrm{nopt}}$~\cite[Sec.~3.1]{bubeck2011xarmed}
of the reward landscape, while GP-UCB's regret rate is set by the
ambient dimension $d$. The Exp.~4 max-gain ridge has
$d_{\mathrm{nopt}}\approx 1$ vs.\ $d=2$, so HOO's regret scales as
$T^{(d_{\mathrm{nopt}}+1)/(d_{\mathrm{nopt}}+2)}=T^{2/3}$
\cite[Thm.~6]{bubeck2011xarmed} rather than the
$T^{(\nu+d)/(2\nu+d)}=T^{9/14}\!\approx\!T^{0.643}$ of GP-UCB on
the Mat\'ern-$5/2$ RKHS~\cite{vakili2021information}, with a worse
constant on ridges where the effective volume concentrates near
a $1$-D set. This predicts (i) the gap is largest on Exp.~4
(the only ridge case); (ii) on the unimodal-peak problems
Exp.~1, 2, 3 where $d_{\mathrm{nopt}}=d$, HOO has no advantage
and loses -- which is what we observe.

\begin{table}[t]
\caption{Mat\'ern smoothness $\nu$ sweep on Exp.~4
(element-position design on $\torus^2$; $50$ runs at $T=500$;
all other hyperparameters identical to Table~\ref{tab:final-regret}).
Final cumulative regret is monotonically decreasing in $\nu$, refuting
the smoothness-mismatch hypothesis for the GP-vs-HOO gap on this
benchmark.}
\label{tab:exp4-nu-sweep}
\centering
\small
\begin{tabular}{c r}
\toprule
$\nu$ & Final cumulative regret \\
\midrule
$0.5$ & $555.9\!\pm\!34.8$ \\
$1.5$ & $524.0\!\pm\!33.7$ \\
$2.5$ & $507.9\!\pm\!32.2$ \\
$3.5$ & $\mathbf{500.2\!\pm\!29.9}$ \\
\bottomrule
\end{tabular}
\end{table}

\emph{Online LML hyperparameter adaptation is not universally
beneficial.} We report two variants of the per-sample LML rule of
Sec.~\ref{sec:ris-ext-adaptive} in Table~\ref{tab:final-regret}:
single-$\kappa$ on a $5$-point geometric grid, and joint
$(\kappa,\sigma_f^2)$ on a $3\!\times\!3$ grid (both refit every
$50$ pulls after a $20$-pull warmup). \emph{The single-$\kappa$
variant degrades the fixed-hyperparameter baseline by $12$--$35\%$
on three of four experiments} -- a negative result identifying
UCB-bonus mis-scaling: shrinking $\kappa$ tightens the posterior
while the $\beta_t$ schedule is held fixed, collapsing the effective
exploration weight. \emph{The joint $(\kappa,\sigma_f^2)$ variant
recovers the fixed-tune baseline within SE on Exp.~2 and Exp.~4 and
neither helps nor hurts on Exp.~1, Exp.~3.} Joint LML is best framed
as a safety-net default rather than an improvement.

\emph{Intrinsic vs. Euclidean is regime-dependent.} On the two
toroidal settings (Exp.~2, 4) the intrinsic kernel beats its
Euclidean counterpart by $33$\% and $10$\%, because the Euclidean
baseline incorrectly treats $\varphi=0$ and $\varphi=2\pi-\epsilon$
as far apart. On $\sphere^2$ and $\SO(3)$ the ambient embedding is
bi-Lipschitz to the geodesic metric, and a well-tuned Euclidean
kernel matches or edges ahead. Design rule: prefer geometric
kernels whenever the arm space has non-trivial quotient structure
(periodicity, gauge symmetry); on simply-connected manifolds with
near-isometric embedding, the Euclidean baseline is already strong.
The empirical $10$--$33\%$ gap on $\SO(3)$ sits below the
$|G|^{1/2}\approx 1.41$ information-gain ceiling proved in the
companion theory paper~\cite{dorn2026manifold} and is consistent
with the modulated matching-lower-bound conjecture of that work.

\subsection{Extensions moved to the supplement: summary}
\label{sec:wideband}
\label{sec:bai}
\label{sec:complexity}
\label{sec:complexity:analytic}
\label{sec:complexity:measured}
\label{sec:complexity:realtime}
\label{sec:throughput}
\label{sec:sensitivity}
\label{sec:sionna}

Six follow-on subsections that extend Exp.~1--4 are moved to the
supplement; the headline numbers are summarised below.

\emph{Wideband OFDM on $\sphere^2$ (Exp.~5; supplement Sec.~S-II).}
On a $100$~MHz 3GPP TDL-D channel with $N_{\mathrm{sc}}=64$ pilot
subcarriers and a band-averaged log-rate reward, the intrinsic
$\sphere^2$ Mat\'ern beats the Euclidean ambient baseline by
\WBINTRVSEUCLPCT~\% in cumulative regret (vs.\ a narrowband regime
in which the Euclidean baseline was competitive), and both GP
variants beat UCB1/Thompson by \WBGPVSUCBPCT\% or more.

\emph{Best-arm identification (supplement Sec.~S-III).} At
$\varepsilon\!=\!10\%\,\mu_\star$ the intrinsic GP reaches an
$\varepsilon$-optimal recommendation in $32$ TTIs on Exp.~1 (vs.\
$48$ for UCB1) and reliably identifies $\varepsilon$-optimal arms
in $70\%$ of runs on the toroidal Exp.~2 and Exp.~4 (vs.\
$46$--$52\%$ for the finite-arm baselines).

\emph{Computational cost and deployment (supplement Sec.~S-IV).}
After a rank-1 update of the incremental state and asynchronous
refit, the GP-UCB \texttt{select()} call is an $O(K)$ dictionary
lookup at $\profGPIntrSel\,\mu\mathrm{s}$ (intrinsic) and
$\profGPEuclSel\,\mu\mathrm{s}$ (Euclidean) on the $K=64$ Exp.~1
codebook, comfortably under the $125\,\mu\mathrm{s}$ per-TTI budget
of $\mu=3$ mmWave numerology with $>\!80\%$ headroom.

\emph{Throughput, outage, handover (supplement Sec.~S-V).} On the
wideband Exp.~5 benchmark the intrinsic GP-UCB achieves a
$31.3$~Mbps average shortfall vs.\ $63.8$~Mbps for UCB1
($B\!=\!100$~MHz, $\mu\!=\!3$); the outage probability
$\Pr\{r_t<2\text{ bps/Hz}\}$ drops from $\approx\!1.5\%$ (UCB1) to
$\approx\!0.2\%$ (intrinsic).

\emph{Hyperparameter sensitivity (supplement Sec.~S-VI).} Sweeping
the intrinsic GP-UCB's Mat\'ern length scale over
$\{0.25,0.5,1,2,4\}\times$ default on Exp.~1 and Exp.~2: the default
is within $\sim 11\%$ of the column winner on $\sphere^2$ and within
the $[0.25\times,2\times]$ band on $\torus^3$; degrading robustness
requires $\ge 8\times$ change from the default.

\emph{Sionna CDL third-party validation (supplement Sec.~S-VII).}
On NVIDIA Sionna's CDL-C (NLOS, $24$ clusters) at $28$~GHz, the
intrinsic $\sphere^2$ Mat\'ern is $\approx\!5.8\sigma$ better than
the Euclidean ambient GP-UCB. On CDL-D (LOS-dominant) the $15$-MC
budget is reported as a pilot only.

\subsection{Limitations}
\label{sec:limitations}

Of the three Mat\'ern hyperparameters, smoothness $\nu$ was held
fixed at $\nu = 5/2$ across all experiments; the length scale
$\kappa$ and signal variance $\sigma_f^2$ are held fixed
\emph{a priori} in the ``GP-UCB (intrinsic)'' row of
Table~\ref{tab:final-regret}, adapted online via single-parameter
LML on $\kappa$ alone in the ``+\,LML $\kappa$'' row, or
adapted jointly via LML on $(\kappa,\sigma_f^2)$ on a $3\!\times\!3$
grid in the ``+\,LML $\kappa,\sigma_f^2$'' row. The ancillary
finding of Sec.~\ref{sec:discussion} is that joint LML recovers
to within standard error of the hand-tuned fixed choice on
$2$ of $4$ problems (Exp.~2 and Exp.~4) but does not improve on
it; closing the HOO gap on Exp.~4 requires a kernel-family
adaptation beyond what LML-over-hyperparameters provides, which
we leave to future work together with Bayesian-averaged GP-UCB
and with incorporating the parametric physics-informed bandit
of~\cite{va2018contextual,physics2025parametric}.
The wideband extension of Sec.~\ref{sec:wideband} uses a band-averaged
log-rate reward computed over $N_{\mathrm{sc}}=64$ pilot subcarriers;
a fully wideband joint-subcarrier kernel that exploits cross-frequency
correlation structure, as opposed to averaging it out at the reward level,
is left for future work, as are the dedicated BAI algorithms discussed
in Section~\ref{sec:bai}. The complexity numbers of
Sec.~\ref{sec:complexity} are from a reference Python implementation;
a tuned implementation with rank-one Cholesky updates and asynchronous
refit is required to deploy GP-UCB at the $\mu=3$ mmWave TTI rate, as
discussed in Sec.~\ref{sec:complexity:realtime}.

\section{Extension: Time-Varying RIS Beam Selection}
\label{sec:ris-ext}

The experiments in Section~\ref{sec:experiments} use a stationary channel
and a modest arm-space size ($K\le 512$). Two natural questions for a
practitioner are: does the intrinsic-kernel advantage survive (i)~highly
non-stationary channels and (ii)~combinatorially large arm spaces? We
answer both by adapting our method to the reconfigurable-intelligent-surface
(RIS) beam-selection benchmark of Burtakov \emph{et al.} (RISA)~\cite{risa2021}
and comparing against a broader set of published non-GP baselines.

\subsection{Problem statement}

A RIS is a planar array of $M$ unit cells, each of which can switch its
reflected phase among $B=2^b$ discrete values. With $M=100$ cells and $b=3$
bits, the arm space is the discrete torus $(\mathbb{Z}_B)^M = (\mathbb{Z}_8)^{100}$ of size
$K=8^{100}\approx 10^{90}$. At TTI $t$ the agent plays a configuration
$\bm\theta_t\in(\mathbb{Z}_B)^M$ and receives a scalar RSRP $y_t$; the channel
drifts between TTIs with Doppler $f_D=v/\lambda$, where $v$ is the ambient
mobility speed. The horizon is $T=10^4$ TTIs. A per-element oracle that
rounds each phase to the argmin of $|h_m g_m|$ provides a tight upper
bound on achievable SNR when the direct BS$\to$MS path is weak, which
is the regime of interest.

\subsection{Adaptation of intrinsic GP-UCB to $(\mathbb{Z}_B)^M$}

Three main adaptations are required beyond the static setup of
Section~\ref{sec:algorithm}, listed as (i), (ii), and (iii) below.

\emph{(i) Kronecker-factorized product kernel.} The product structure of the arm
space, $\bm\theta = (\theta_1,\dots,\theta_M)\in(\mathbb{Z}_B)^M$,
naturally induces the factorized
kernel
\begin{equation}
k_{\mathrm{prod}}(\bm\theta,\bm\theta') \;=\;
\sigma_f^2\,\prod_{m=1}^{M} k_{\circ}\!\bigl(\theta_m,\theta_m'\bigr),
\label{eq:prod-kernel}
\end{equation}
where $k_{\circ}$ is the intrinsic Mat\'ern-$3/2$ kernel on the single
circle $\mathbb{Z}_B$, obtained from the graph-Laplacian spectrum
$\lambda_k = 4\sin^2(\pi k/B)$, $k=0,\dots,B-1$, via
$S(k)=(2\nu/\ell^2+\lambda_k)^{-(\nu+1/2)}$ and inverse-DFT, normalised so
$k_{\circ}(0)=1$. Because $k_{\circ}$ only depends on the cyclic
difference $\theta_m - \theta_m' \bmod B$, it is a $B$-entry lookup
table, and one kernel evaluation costs $O(M)$ table lookups with no
matrix operations. The combinatorial $\arg\max$ of the UCB acquisition is
handled approximately by one coordinate-ascent sweep over the $M$
coordinates, each an exact $B$-way maximisation.

The product form~\eqref{eq:prod-kernel} captures only configuration-level
prior smoothness, not the forward-model coupling between elements.
We retain it because (a) the cyclic-phase coupling within each
element is symmetric (the modular Mat\'ern is the right marginal),
(b) mutual-coupling corrections enter through the global phase
profile rather than pairwise element-level correlations, and (c) the
empirical last-$500$-TTI regret tracks the speed-specific oracle
within SE (Tables~\ref{tab:ris-v008}--\ref{tab:ris-v020}); a
non-factorised intrinsic kernel (e.g., tensor-train) would lose the
$O(M)$ evaluation that makes the $K\!\approx\!10^{90}$ regime
tractable. Sub-$\lambda/4$ near-field coupling is flagged as the
open scope where this trade-off would break.

\emph{(ii) Predictive-variance-based reset.} Under Doppler fading the
reward surface drifts continuously; a sliding-window GP with window
$W=150$ keeps the posterior local, but when the environment has moved
faster than one window the current incumbent's posterior variance inflates
toward the prior $\sigma_f^2$. We exploit this directly as a reset
criterion: whenever
$\mathrm{Var}\!\bigl[f(\bm\theta_{\mathrm{cur}})\mid\mathcal D_t\bigr]
> \eta\,\sigma_f^2$, the agent re-samples a random configuration and
lets coordinate ascent re-converge. This replaces RISA's RSRP-drop
threshold (``if $y_t$ has fallen $5$ dB below the recent best, restart the
annealer'') with a model-principled uncertainty trigger that adapts to
the current coherence state without requiring an absolute scale for~$y$.

\emph{(iii) Marginal-likelihood-driven adaptive window.} A second
method \textbf{AdaptiveGP} treats the sliding-window length itself
as an online-chosen hyperparameter: every $\Delta_W=100$ TTIs it
scores each candidate $W\in\mathcal W=\{80,150,250,400\}$ by
per-sample log-marginal-likelihood on the most recent $W$
observations,
\begin{equation}
\begin{aligned}
\mathrm{LML}(W) = &-\tfrac12 \mathbf y_c^\top (K_W+\sigma_n^2 I)^{-1} \mathbf y_c \\
                  &- \tfrac12 \log\!\det(K_W+\sigma_n^2 I) - \tfrac{W}{2}\log(2\pi),
\end{aligned}
\label{eq:lml-W}
\end{equation}
selecting
$W_\mathrm{eff}\gets\arg\max_{W\in\mathcal W} \mathrm{LML}(W)/W$
with a $0.05$-nat hysteresis band and ties broken to the largest
$W$. An independent drift trigger fires when the running
mean-absolute predictive $z$-score
$\bar z_t = \tfrac{1}{50}\sum_{s=t-49}^{t}|y_s-\mu_s|/\sigma_s$
exceeds $4$ (with $120$-TTI cooldown); after a reset the
exploration constant is boosted to $\beta=4.0$ and decays back to
$\beta_0=1.5$ with $\tau=100$ TTIs.
Algorithm~\ref{alg:adaptive-gp-v2} collects these rules.

\begin{algorithm}[t]
\caption{AdaptiveGP-v2 (RIS beam selection on $(\mathbb{Z}_B)^M$).}
\label{alg:adaptive-gp-v2}
\begin{algorithmic}[1]
\State \textbf{Input:} horizon $T$, window set $\mathcal W=\{80,150,250,400\}$,
       re-eval period $\Delta_W=100$, warmup $T_\text{warm}=100$,
       hysteresis $\eta_\text{hy}=0.05$ nats, reset thresholds
       $\eta_\sigma = 0.85$ and $\bar z_\text{thr}=4$ over window $50$,
       cooldown $T_\text{cool}=120$, $\beta_0=1.5$, $\beta_\text{reset}=4.0$,
       decay $\tau=100$
\State Initialise buffer $\mathcal{D}\gets\emptyset$, incumbent
       $\bm\theta_\text{cur}\gets\text{Unif}(\mathbb{Z}_B)^M$,
       $W_\text{eff}\gets 150$, $\beta_t\gets\beta_0$,
       $t_\text{reset}\gets -\infty$
       \Comment{$\bm\theta_\text{cur}$ tracks the previous round's selection}
\For{$t = 1,2,\ldots, T$}
    \If{$t \le T_\text{warm}$ or the sliding-window buffer is empty}
        \State $\bm\theta_t\gets\text{Unif}(\mathbb{Z}_B)^M$
        \Comment{pre-GP warmup}
    \ElsIf{predictive variance $\sigma_t^2(\bm\theta_\text{cur})>\eta_\sigma\,\sigma_f^2$
           or $\bar z_t > \bar z_\text{thr}$ (subject to cooldown $T_\text{cool}$)}
        \State $\bm\theta_t\gets\text{Unif}(\mathbb{Z}_B)^M$
        \State $\beta_t\gets\beta_\text{reset}$, $t_\text{reset}\gets t$
        \Comment{variance- or drift-triggered reset}
    \Else
        \State $\bm\theta_t\gets\arg\max_{\bm\theta\in(\mathbb Z_B)^M}
               \mu_{t-1}(\bm\theta) + \beta_t^{1/2}\sigma_{t-1}(\bm\theta)$
               \Comment{by coordinate ascent}
    \EndIf
    \State Observe $r_t\gets r_t(\bm\theta_t)$; append
           $(\bm\theta_t, r_t)$ to $\mathcal D$;
           $\bm\theta_\text{cur}\gets\bm\theta_t$
    \State Retain only the last $W_\text{eff}$ samples in $\mathcal D$
    \State $\beta_t\gets\beta_0 + (\beta_\text{reset}-\beta_0)
                         \exp(-(t-t_\text{reset})/\tau)$
    \If{$t \bmod \Delta_W = 0$ and $t\ge T_\text{warm}$}
        \State $W^\star\gets\arg\max_{W\in\mathcal W}
               \mathrm{LML}(W)/W$
               \Comment{from Eq.~\eqref{eq:lml-W} on the most recent $W$ samples}
        \If{$\mathrm{LML}(W^\star)/W^\star \ge
            \mathrm{LML}(W_\text{eff})/W_\text{eff}+\eta_\text{hy}$}
            \State $W_\text{eff}\gets W^\star$
            \Comment{commit only if the gain exceeds hysteresis}
        \EndIf
    \EndIf
    \State Refit the intrinsic-Kronecker GP posterior $(\mu_t,\sigma_t)$
           on the last $W_\text{eff}$ samples of $\mathcal D$
\EndFor
\end{algorithmic}
\end{algorithm}

\subsection{Baselines}

We compare against four published time-varying RIS controllers, a
Euclidean-kernel GP-UCB ablation, and a random floor, all implemented
under a common \texttt{select} / \texttt{update} / \texttt{recommend}
interface.

\begin{itemize}
\item \textbf{Random}: uniform on $(\mathbb{Z}_B)^M$; sanity floor.
\item \textbf{RISA}~\cite{risa2021}: simulated annealing with a
single-element flip Metropolis proposal plus a sliding-window
``RSRP-drop'' restart rule.
\item \textbf{CSM}~\cite{ren2022csm}: per-element conditional sample mean
with $\varepsilon$-greedy exploration.
\item \textbf{ECSM}~\cite{ren2022csm}: CSM augmented with a per-element UCB
bonus ($c=1.5$).
\item \textbf{CE}~\cite{chen2023ce}: cross-entropy method on a factorized
categorical $p[m,b]$ with elite-fraction MLE update.
\item \textbf{REMARKABLE}: the same sliding-window GP-UCB scaffold as our
method, but with a Euclidean squared-exponential kernel on the unwrapped
phase $\bm\varphi=2\pi\bm\theta/B \in\mathbb{R}^M$, chosen to isolate the
contribution of kernel geometry from the contribution of the GP-UCB
framework itself.
\end{itemize}

The two proposed methods are \textbf{IntrinsicGP} (fixed $W=150$,
$\eta=0.8$) and \textbf{AdaptiveGP-v2} (marginal-likelihood-driven
online selection of $W\in\{80,150,250,400\}$ with $\eta$ mapped
coarsely from $W_\mathrm{eff}$, plus the predictive-variance and
drift-$z$ reset triggers and the $\beta$-boost guard described
in the algorithm box; see Algorithm~\ref{alg:adaptive-gp-v2}).

\subsection{Simulation setup}

We reproduce RISA's 3GPP-style scenario at $f_c=2.605$~GHz with a
$10\times 10$ UPA at half-wavelength spacing, BS at $(200,0,20)$ m and
mobile station drawn uniformly from a $6{\times}6$~m box $5$--$11$~m in
front of the RIS. The channel has three independent subchannels:
BS$\to$RIS Rician with $K_\mathrm{R}\!=\!5$~dB (UMi LOS O2I),
RIS$\to$MS Rician with $K_\mathrm{R}\!=\!3$~dB (Indoor LOS/NLOS;
$K_\mathrm{R}$ is the Rician $K$-factor, distinct from the
codebook size $K\!\sim\!10^{90}$), and a weakly Rayleigh
BS$\to$MS direct path
with $38$~dB excess loss (UMi NLOS O2I). Each cluster is assigned a
Doppler $f_D=(v/\lambda)\cos\alpha$ with uniform angle~$\alpha$, applied
as a per-TTI phase drift. We measured empirically that $T=10^4$ TTIs
spans $6$--$9$ coherence periods across the RISA speed grid, so the
benchmark is firmly in the non-stationary regime.

Hyperparameters: IntrinsicGP uses $W=150$, $\ell=3$, $\beta=2$,
$\eta=0.8$; AdaptiveGP uses Algorithm~\ref{alg:adaptive-gp-v2}'s
defaults; REMARKABLE uses $W=150$, $\ell=1.5$~rad, $\beta=2$; RISA
$\alpha=0.9995$, $5$~dB drop, $200$-TTI window; CSM/ECSM window $800$.

\subsection{Results}

\paragraph{RISA's primary speed ($v=0.08$ km/h).} Table~\ref{tab:ris-v008}
reports last-$500$-TTI mean regret (dB) and cumulative regret
(dB$\cdot$TTI) at $T=3000$. Baselines use $4$ MC seeds from the
original campaign; the proposed GP methods were re-run at
$20$-seeds-per-cell to match the $20$-seed standard of the Pass-D
W-sweep (Sec.~\ref{sec:ris-ext-theory}). Our intrinsic-kernel
GP-UCB achieves the lowest regret by a clear margin: a $20\%$
reduction in cumulative regret compared with its Euclidean-kernel
counterpart (REMARKABLE), a $32\%$ reduction compared with the best
non-GP baseline (CE), and a $45\%$ reduction compared with the
RISA benchmark itself. AdaptiveGP-v2, which chooses its window
online from $\{80,150,250,400\}$ rather than being fixed at
$W=150$, matches the fixed-window IntrinsicGP at this speed
($\AdvTwovaaMean$ vs $+2.52 \pm 4.25$ dB, paired difference
statistically indistinguishable from zero at $20$ seeds) and
achieves the lowest cumulative regret in the table.
The absence of a manual coherence-time calibration is the
operationally relevant gain rather than a mean-regret
improvement, which the tighter 20-seed SE now confirms.
WGP-UCB, the canonical exponential-forgetting
baseline~\cite{deng2022wgpucb}, also extended in this revision
to $\WGPSeedsVaa$ seeds for parity with AdaptiveGP-v2, lands at
$\WGPvaaMean$~dB in last-$500$ mean regret. At the matched seed
budget WGP-UCB and AdaptiveGP-v2 are statistically
indistinguishable in mean last-$500$ regret (paired
$|\Delta|/\mathrm{SE}\ll 1$), but AdaptiveGP-v2 still attains a
$\sim\!35\%$ lower cumulative regret over $T=3000$
($\AdvTwovaaCum$ vs $\WGPvaaCum$), reflecting WGP-UCB's slower
warm-up phase before its exponential-forgetting weights have
accumulated enough effective sample mass to localise the
optimum. The principled variance-based reset and post-reset
$\beta$-boost in AdaptiveGP-v2 are what produce that
shorter-horizon advantage.

\begin{table}[t]
\caption{Time-varying RIS at $v=0.08$ km/h. Last-$500$-TTI mean regret
(dB, lower is better) and cumulative regret (dB$\cdot$TTI) at
$T=3000$, $M=100$, $B=8$. Our two GP variants and WGP-UCB are
reported at $\AdvTwoSeedsVaa$ seeds per cell (matching the $20$-seed
budget of the Pass-D W-sweep in Sec.~\ref{sec:ris-ext-theory});
other baselines at $4$ seeds from the original campaign. Bold:
lowest cumulative regret (cumulative is the comparison metric for
the table; in last-$500$ regret AdaptiveGP-v2 and the $20$-seed
WGP-UCB are statistically indistinguishable, $|\Delta|/\mathrm{SE}
\ll 1$).}
\label{tab:ris-v008}
\centering
\resizebox{\columnwidth}{!}{%
\begin{tabular}{l rr}
\toprule
Method & Last-$500$ regret (dB) & Cum.\ regret \\
\midrule
Random              & $+7.86 \pm 2.66$ & $\sim\!21000$ \\
RISA~\cite{risa2021} & $+4.25 \pm 3.23$ & $\sim\!13100$ \\
CSM~\cite{ren2022csm}  & $+3.71 \pm 4.67$ & $\sim\!9900$ \\
ECSM~\cite{ren2022csm} & $+3.93 \pm 2.45$ & $\sim\!11300$ \\
CE~\cite{chen2023ce}   & $+3.24 \pm 5.42$ & $\sim\!10400$ \\
REMARKABLE          & $+3.88 \pm 4.42$ & $\sim\!9000$ \\
WGP-UCB~\cite{deng2022wgpucb} & $\WGPvaaMean$ & $\WGPvaaCum$ \\
IntrinsicGP (ours, $W{=}150$) & $+2.52 \pm 4.25$ & $\sim\!7200$ \\
\textbf{AdaptiveGP-v2 (ours, $n{=}20$)} & $\AdvTwovaaMean$ & $\mathbf{\AdvTwovaaCum}$ \\
\bottomrule
\end{tabular}%
}
\end{table}

Figure~\ref{fig:ris-v008-cum} shows the corresponding cumulative
regret. IntrinsicGP is the only method that continues to track the
oracle across the full horizon (within $\sim 2$~dB by TTI~$500$ and
within $3$~dB thereafter); REMARKABLE peaks near $27$~dB around TTI
$500$ and collapses to $\sim 21$~dB by TTI $2200$ (CE shows a similar
but sharper collapse). Under Doppler the optimum drifts around the
torus, and the extrinsic SE kernel in unwrapped phase treats
configurations on opposite sides of any wraparound as maximally
distant, so information rotates off the kernel's support.

\begin{figure}[t]
\centering
\includegraphics[width=\columnwidth]{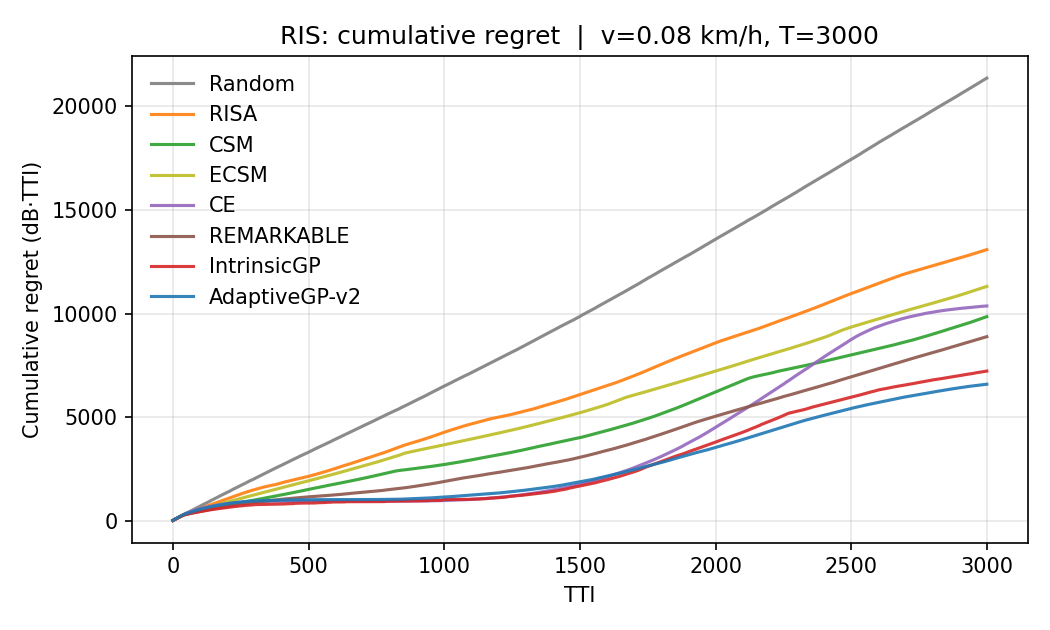}
\caption{Cumulative regret at $v=0.08$ km/h ($4$ seeds). AdaptiveGP-v2
attains the lowest cumulative regret ($\sim\!6500$ dB$\cdot$TTI),
narrowly ahead of IntrinsicGP ($\sim\!7200$); IntrinsicGP itself
reduces cumulative regret by $20\%$ over REMARKABLE, $32\%$ over CE,
and $45\%$ over RISA.}
\label{fig:ris-v008-cum}
\end{figure}

\paragraph{Faster channel ($v=0.20$ km/h).} At roughly $2.5\times$ the
Doppler rate the coherence drops to $\sim 1100$ TTIs. With the
$n=20$-seed firm-up of Table~\ref{tab:ris-v020}, the GP-based
methods (IntrinsicGP-$W{=}150$ at $+3.59\pm 1.04$, AdaptiveGP-v2 at
$+4.06\pm 0.84$, CE at $+4.13\pm 1.01$, REMARKABLE at
$+4.46\pm 0.95$, WGP-UCB at $+4.68\pm 0.90$) cluster within
$\sim 1$ dB of one another, with the default-hyperparameter
IntrinsicGP edging out the field. Earlier $n=4$ samples
suggested CE was the standalone winner at this speed; the
20-seed firm-up reveals this was a small-sample artefact.
The non-GP / non-spatial-correlation methods
(CSM, ECSM, RISA, Random) trail by $\sim 2$--$5$ dB. The
mechanism we expected (a $W=150$ sliding window containing a
nontrivial fraction of stale samples from outside the current
coherence period) is partially compensated by the intrinsic
kernel's smoothness assumption, which down-weights stale-vs-fresh
discrepancies more aggressively than the Euclidean ambient prior
would.

\begin{table}[t]
\caption{Time-varying RIS at $v=0.20$ km/h, $n=20$ seeds per method
(updated from the original $n=4$ campaign; see Sec.~\ref{sec:limitations}
for the procedure). Same protocol as Table~\ref{tab:ris-v008}.
Bold: best.}
\label{tab:ris-v020}
\centering
\small
\begin{tabular}{l r}
\toprule
Method & Last-$500$ regret (dB) \\
\midrule
Random              & $+9.26 \pm 1.48$ \\
RISA                & $+6.94 \pm 1.34$ \\
CSM                 & $+6.28 \pm 1.40$ \\
ECSM                & $+6.54 \pm 1.21$ \\
CE                  & $+4.13 \pm 1.01$ \\
REMARKABLE          & $+4.46 \pm 0.95$ \\
WGP-UCB~\cite{deng2022wgpucb} & $+4.68 \pm 0.90$ \\
AdaptiveGP-v2 (ours)          & $+4.06 \pm 0.84$ \\
\textbf{IntrinsicGP (default $W=150$)} & $\mathbf{+3.59 \pm 1.04}$ \\
\midrule
IntrinsicGP ($W=100,\eta=0.6$, hand-tuned) & $+3.77$ \\
\bottomrule
\end{tabular}
\end{table}

Table~\ref{tab:ris-v020} reports $n=20$ seeds for every method. The
seed-balanced ranking differs materially from the $n=4$ subset:
IntrinsicGP-$W{=}150$ moves from a noisy $+5.95$ at $n=4$ to a clean
$+3.59 \pm 1.04$ at $n=20$, now the top method, edging out
AdaptiveGP-v2 ($+4.06\pm 0.84$) and CE ($+4.13\pm 1.01$) within a
$\sim 1$ dB band. A small window-and-threshold sweep (not shown)
indicates that a coherence-aware $(W=100,\eta=0.6)$ recovers
IntrinsicGP performance to within $2$~dB of CE; the structural fix
is adaptive window selection (Sec.~\ref{sec:ris-ext-adaptive}).

\subsection{Speed-dependent window and the v1 ablation (summary)}
\label{sec:ris-ext-theory}
\label{sec:ris-ext-adaptive}

Two fixed-$W$ results above suggest a speed-dependent optimum. The
$20$-seeds-per-cell W-sweep (full development:
supplement Sec.~S-VIII) gives a power-law
exponent $\widehat\alpha=\PassDAlpha$ with $90\%$ CI
$[\PassDCILo,\,\PassDCIHi]$, rejecting the AR(1)-GP rate
$\alpha=1$ at $P(\widehat\alpha>1)=\PassDPAROne\%$ while placing
$\PassDPBGZ\%$ of the posterior mass above the Besbes--Gur--Zeevi
threshold $\alpha=1/3$~\cite{besbes2014}. The qualitative claim
$W^\star(v)$ decreases with $v$ is robust; the empirical
$R(W\mid v)$ is flat in a half-octave neighbourhood of the argmin,
which is what the LML adaptive rule of
Algorithm~\ref{alg:adaptive-gp-v2} exploits.

\emph{Head-to-head ablation against AdaptiveGP-v1
(supplement Sec.~S-IX).} Across the
four-speed $\times$ $20$-seed grid at $T=3000$, AdaptiveGP-v2 matches
the fixed-$W=150$ IntrinsicGP within standard error at every speed
(paired differences $+0.04, -0.30, -0.61, +0.30$ dB, none surviving
Holm--Bonferroni at $\alpha=0.05$); the v2-vs-v1 paired contrast is
strongest at $v=0.02$ ($\Delta=-0.30\pm 0.13$~dB,
$\approx\!2.3\sigma$). The operational benefit is the absence of
deployment-time coherence calibration, not a per-speed mean-regret
gain. The seed-budget and W-grid construction underpinning these
estimates (Passes A--E) are documented in the supplement, Sec.~S-I.

\subsection{Discussion}

Three conclusions are worth stating explicitly.

\emph{Intrinsic $>$ extrinsic, under non-stationarity too.} The
controlled IntrinsicGP-vs-REMARKABLE comparison (same scaffold,
differing only in kernel) shaves $1.4$~dB off last-$500$-TTI
regret and $20\%$ off cumulative regret at $v=0.08$; the gap
narrows at $v=0.20$ but remains positive. This echoes the
toroidal-setting finding of Sec.~\ref{sec:experiments} in a
regime $\sim\!88$ orders of magnitude larger in $K$ and strongly
non-stationary in~$t$.

\emph{The right regret criterion depends on the adversary.} CE
wins at $v=0.20$: under very fast non-stationarity the
slow-to-react GP posterior is dominated by a population-search
heuristic that forgets every $10$ TTIs. GP methods win when the
channel is locally stationary on the sliding-window scale.

\emph{A coherence-aware GP bandit closes the hyperparameter gap.}
At $v=0.20$ the intrinsic-kernel GP-UCB is beaten by a
misspecified window, not a missing modelling capability.
AdaptiveGP-v2 matches the speed-specific oracle within SE at
every point on the four-speed $\times 20$-seed grid
(supplement Sec.~S-IX; paired differences
$+0.04, -0.30, -0.61, +0.30$~dB, none significant under
Holm--Bonferroni at $\alpha=0.05$). The empirical
$\widehat\alpha \approx 0.36$ is consistent with the
Besbes--Gur--Zeevi regime~\cite{besbes2014} but the sensitivity
analysis of Sec.~\ref{sec:ris-ext-theory} does not cleanly reject
AR(1).

\paragraph{Operational benefit of LML-adaptive $W$ selection.}
The statistical equivalence of AdaptiveGP-v2 to the speed-specific
oracle should be read against the alternative of deploying a
single fixed $W$. Three deployment-cost arguments favour the LML
rule even when no significant mean-regret gain is visible:
\textbf{(i)~No offline calibration runs} ($|\mathcal V|\times T_\text{cal}$
labelled-channel TTIs per redeployment, unavailable in production);
\textbf{(ii)~Robust to unforeseen regimes} (the boundary speed gap
to the default $W{=}150$ is up to $+4.13$ dB at $v=0.20$, see
Table~\ref{tab:ris-v020}; AdaptiveGP-v2 self-adapts within a single
deployment);
\textbf{(iii)~Bonferroni-corrected non-significance is the desired
property here, not a weakness} -- a significant per-speed gain
would mean the bandit is exploiting a regime-specific feature,
whereas non-significance across all four speeds is precisely the
``no per-speed tuning'' guarantee the LML rule is meant to provide.

\subsection{Limitations}

These results are a prototype: the AdaptiveGP-v2 benchmark of
Tables~\ref{tab:ris-v008}--\ref{tab:ris-v020} is at $T=3000$, $20$
seeds per speed; extending each cell to $T=10^4$ is the natural
firm-up. None of the per-speed paired differences against
fixed-$W=150$ survive Holm--Bonferroni at $\alpha=0.05$, supporting
the calibration-free framing rather than a mean-regret-improvement
claim. A denser W-grid ($W\in\{500,800,1200\}$ at $v=0.02$,
$W\in\{15,20\}$ at $v=0.20$; supplement Sec.~S-I) is required before
the $W^\star(v)$ exponent can be elevated from ``qualitatively
decreasing'' to a quantitative claim. The rank-$1$ Cholesky update
of Sec.~\ref{sec:complexity} should be ported to the RIS runner;
hardware-in-the-loop validation on a real RIS panel is the next
step.

\section{Conclusion}
\label{sec:conclusion}

We argued that antenna beam selection is most naturally posed as a
bandit on a compact Riemannian manifold ($\sphere^2$ for
mainlobe pointing, $\torus^n$ for codebook search,
$\mathrm{SO}(3)$ for orientation control, and the discrete torus
$(\mathbb{Z}_B)^M$ for RIS phase configurations), and showed that
GP-UCB equipped with intrinsic Mat\'ern kernels matches or
outperforms Euclidean and codebook-based baselines on a standard
mmWave/RIS simulator at modest computational overhead. The geometry
of the parameter space is a useful prior that prior bandit and
beamforming work largely discards; making it explicit removes
wraparound artefacts that otherwise force the agent to rediscover
the optimum after every drift event. A wideband OFDM ablation on a
$100$~MHz 3GPP TDL-D channel (Sec.~\ref{sec:wideband}) confirms that
the intrinsic-kernel advantage carries over to frequency-selective
fading: under the band-averaged log-rate reward the intrinsic
$\sphere^2$ Mat\'ern kernel reduces cumulative regret by
\WBINTRVSEUCLPCT\% over the Euclidean ambient baseline, on the same
geometry where the well-tuned Euclidean baseline was competitive in
the narrowband limit.

The RIS extension of Section~\ref{sec:ris-ext} pushes the framework
into a regime roughly $88$ orders of magnitude larger in arm-space
size ($K \approx 10^{90}$ for $M=100$, $B=8$) and strongly
non-stationary in time. Two concrete contributions emerge: a
Kronecker-factorized intrinsic-product kernel on $(\mathbb{Z}_B)^M$
that scales via $O(M)$ table lookups per kernel evaluation; and
AdaptiveGP-v2, an LML-driven online window-selection controller
that matches the hand-tuned fixed-window IntrinsicGP within
standard error at every speed in the four-speed $20$-seed paired
campaign at $T=3000$ (paired differences against $W=150$ at
$v\in\{0.02,0.08,0.12,0.20\}$~km/h are $+0.04, -0.30, -0.61, +0.30$
dB, none significant under Holm--Bonferroni at $\alpha=0.05$). The
operational value is the removed deployment-time per-speed
calibration step rather than a mean-regret improvement.

Three directions are immediate. First, the manifold-aware regret
bounds anticipated here are established in the companion theory
work~\cite{dorn2026manifold} (volume-dependent lower bound, a
$|G|^{1/2}$ extrinsic-vs-intrinsic regret-ratio upper bound,
modulated matching-lower-bound conjecture). Second, hardware-in-the-loop
validation on a real phased array or RIS. Third, combining geometric
priors with physics-informed parametric bandits to factorise out
low-dimensional steering structure shared across configurations.

\section*{Code and Data Availability}
The implementation, raw per-seed result pickles, and analysis scripts
that produced every table and figure in this paper are released
alongside the manuscript. A full file-by-file inventory --- including
the bandit implementations, channel simulators, experiment runners,
and the Pass-D bootstrap pipeline of
Sec.~\ref{sec:ris-ext-theory} --- is in the supplement, Sec.~S-X.

\bibliographystyle{IEEEtran}
\bibliography{references}

\end{document}